\title{A Compact DAG for Storing and Searching Maximal Common Subsequences} %TODO Please add
\author{Alessio Conte}{Università di Pisa, Italy}{alessio.conte@unipi.it}{https://orcid.org/0000-0003-0770-2235/}{}
\author{Roberto Grossi}{Università di Pisa, Italy}{roberto.grossi@unipi.it}{https://orcid.org/0000-0002-7985-4222/}{}
\author{Giulia Punzi}{National Institute of Informatics, Japan}{punzi@nii.ac.jp}{https://orcid.org/0000-0001-8738-1595/}{}
\author{Takeaki Uno}{National Institute of Informatics, Japan}{uno@nii.ac.jp}{https://orcid.org/0000-0001-7274-279X/}{}
\authorrunning{A. Conte, R. Grossi, G. Punzi, and T. Uno} %TODO mandatory. First: Use abbreviated first/middle names. Second (only in severe cases): Use first author plus 'et al.'
\keywords{Maximal common subsequence, DAG, Compact data structures, Enumeration, Constant amortized time, Random access} %TODO mandatory; please add comma-separated list of keywords
\tikzstyle{vertex}=[circle, draw, inner sep=0pt, minimum size=6pt]
\tikzstyle{prefixvertex}=[circle, draw, fill = black!25!white, inner sep=0pt, minimum size=6pt]
\newcommand{\vertex}{\node[vertex]}
\newcommand{\prefixvertex}{\node[prefixvertex]}
\newcommand{\A}{\ensuremath\mathtt{A}}
\newcommand{\C}{\ensuremath\mathtt{C}}
\newcommand{\G}{\ensuremath\mathtt{G}}
\newcommand{\T}{\ensuremath\mathtt{T}}
\newcommand{\tswing}[1]{\ensuremath{ \ltimes_T(#1)}\xspace}
\newcommand{\bswing}[1]{\ensuremath {\ltimes_B(#1)}\xspace}
\newcommand{\recID}[4]{\ensuremath{\langle #1, #2, #3, #4 \rangle}\xspace}
\newcommand{\nodelab}[1]{\ensuremath{\mathit{ID}(#1)}\xspace}
\newcommand{\edgelab}[1]{\ensuremath{\ell(#1)}\xspace}
\newcommand{\stpaths}[3]{\ensuremath{\mathcal P_{#1, #2} (#3)}\xspace}
\newcommand{\MCSDAG}{\textsf{MDAG}\xspace}
\newcommand{\graphbuild}[1]{\ensuremath{\textsc{buildDAG}(#1)}\xspace}
\newcommand{\MCS}{\ensuremath{\mathit{MCS}\xspace}}
\newcommand{\Ext}{\ensuremath{\mathit{Ext}\xspace}}
\newcommand{\nextc}{\ensuremath{\mathit{next}\xspace}}
\begin{document}

\maketitle

%TODO mandatory: add short abstract of the document
\begin{abstract}
\emph{Maximal Common Subsequences} (MCSs) between two strings $X$ and $Y$ are subsequences of both $X$ and $Y$ that are maximal under inclusion. MCSs relax and generalize the well known and widely used concept of Longest Common Subsequences (LCSs), which can be seen as MCSs of maximum length. 
While the number both LCSs and MCSs can be exponential in the length of the strings, LCSs have been long exploited for string and text analysis, as simple compact representations of all LCSs between two strings, built via dynamic programming or automata, have been known since the '70s.
MCSs appear to have a more challenging structure: even listing them efficiently was an open problem open until recently, thus narrowing  the complexity difference between the two problems, but the gap remained significant. 
In this paper we close the complexity gap: we show how to build DAG of polynomial size---in polynomial time---which allows for efficient operations on the set of all MCSs such as enumeration in Constant Amortized Time per solution (CAT), counting, and random access to the $i$-th element (i.e., rank and select operations). Other than improving known algorithmic results, this work paves the way for new sequence analysis methods based on MCSs.  
\end{abstract}

\section{Introduction}
\label{sec:introduction}

The \emph{Longest Common Subsequence} (LCS) \cite{stringtostring, survey, hirschberg, fasterlcs} have thoroughly been studied in a plethora of string comparison application domains, like spelling error correction, molecular biology, and plagiarism detection, to name a few. The LCS is a special case of \emph{Maximal Common Subsequence} (MCS) for any two strings $X,Y$: it is a string $S$ that is a subsequence of both $X$ and $Y$, and is inclusion-maximal, namely, no other string $S'$ containing $S$ is also a common subsequence of $X$ and $Y$. The set of all MCSs for $X$ and $Y$ is denoted by $\MCS(X,Y)$. For example, $\MCS(X,Y) = \{\T\A\C\A,\G\}$ for $X = \T\C\A\C\A\G$ and $Y = \G\T\A\C\T\A$, whereas $\T\C\A \not \in \MCS(X,Y)$ as it is contained in $\T\A\C\A$, and the latter is also the only LCS. A crucial observation is the following: if $S$ is a common subsequence, it is always contained in some MCS but this is not necessarily true for LCS (e.g.~$S=\G$ is not contained in $\T\A\C\A$).

In general, LCSs only provide us with information about the longest possible alignment between two strings, while MCSs offer a different range of information, possibly revealing slightly smaller but alternative alignments. For very long strings an MCS may be just slightly shorter than an LCS but provide information on parts of the strings not found by LCSs. In principle, MCS could provide helpful information in all the applications where LCS are used.

While there is a quadratic conditional lower bound for the computation of LCS, based on the Strong Exponential Time Hypothesis~\cite{hardness}, no such bound exists for MCS: actually, an MCS between two strings of length $n$ can be extracted in $O(n\sqrt{\log n /\log\log n})$ time~\cite{SAKAI2019132}.
Moreover it is NP-hard to compute the LCS of an arbitrary number of strings~\cite{maier1978complexity}, whereas a recent polynomial-time algorithm for extracting an MCS from multiple strings exists~\cite{hirota2023fast}.
It is worth noting that even though there are a few more different approaches in the literature to find LCS or common subsequences with some kind of constraints (e.g. common subsequence trees \cite{hsu1984computing}, common subsequence automata \cite{crochemore2003directed}), the above observations motivate further investigation to directly deal with MCS. 

In this paper we proceed along that direction, and consider the problem of storing and searching the set $\MCS(X,Y)$. The main hurdle is that $\MCS(X,Y)$ could contain an exponential number of distinct strings~\cite{conte2022enumeration}: consequently, any trie-based or immediate automaton representation
would require \textit{exponential} time and space for its construction. We improve significantly over this direction as we list in our contributions, where $n = \max\{|X|,|Y|\}$ and $\sigma$ is the size of the alphabet $\Sigma$ for $X$ and $Y$:
\begin{itemize}
        \item We introduce a labeled compact direct acyclic graph $\MCSDAG(X,Y)$ (for $MCS$ $DAG$) that represents the strings in $\MCS(X,Y)$ in polynomial space $O(n^3 \sigma)$ and can be built in polynomial time $O(n^3 \sigma \log n)$. 
    \item For any string $P$, we show how $\MCSDAG(X,Y)$ can \textit{search} the strings with prefix $P$ from $\MCS(X,Y)$ and report them in lexicographic order, in $O(|P| \log \sigma + \mathit{occ})$ time, where $\mathit{occ}$ is the number of reported strings.
    \item For any integer $1 \leq i \leq |\MCS(X,Y|$, we show how $\MCSDAG(X,Y)$ with input $i$ can \textit{select} the $i$th string $S$ in lexicographic order from $\MCS(X,Y)$, in $O(|S| \log \sigma)$ time; the inverse operation of \textit{rank} for input $S$ is supported in the same complexity, where $i$ is returned.
    \item We can list all the strings from $\MCS(X,Y)$ lexicographically in constant amortized time (CAT), namely, $O(|\MCS(X,Y)|)$ time.
\end{itemize}

In Section~\ref{section:MCSDAG-construction} we implement $\MCSDAG(X,Y)$ as a direct acyclic graph (DAG) where the only zero indegree node is the source $s$ and the only zero outdegree node is the target $t$. Each edge is labeled with a character from the alphabet, so that any two outgoing edges from the same node have different characters as labels. Moreover, each $st$-path corresponds to a string in $\MCS(X,Y)$, obtained by concatenating the characters on the edge traversed along the $st$-path; vice versa, each string in $\MCS(X,Y)$ is spelled out by an $st$-path. In order to define and build $\MCSDAG(X,Y)$ we use some properties on MCSs previously introduced in~\cite{conte2022enumeration} and described in Section~\ref{section:cat-alg}, plus some new ideas to keep the size of $\MCSDAG$ polynomial. As a result, after $\MCSDAG$ has been built and its unary paths have been compacted, the aforementioned operations can be implemented in a simple way, as shown in Section~\ref{sec:operations}.

Finally, we observe that labels in $\MCSDAG(X,Y)$ may not have constant size each, so of course printing each reported string may require more than constant time. However, our algorithms are able to report a \textit{compressed} form of the output by simply printing the changes, similarly to~\cite{TOMITA200628}. On the other hand, we recall that the complexity of the algorithms should not measure the printing procedures, as formalized by Frank Ruskey~\cite[Section 1.7]{ruskey2003combinatorial} in the ``Don’t Count the Output Principle'', but rather the amount of data change that takes place. As the latter takes constant time per solution, we have CAT complexity.

\medskip

\noindent\textbf{Related work.} \quad
Maximal common subsequences were first introduced in~\cite{first}, in the context of LCS approximation. 
The first algorithm for finding an MCS between two strings was presented in~\cite{Sakai}, and subsequently refined in~\cite{SAKAI2019132}. The latter algorithm finds an MCS between two strings of length $n$ in $O(n\sqrt{\log n /\log\log n})$.
These algorithms can be also used to extend a given sequence to a maximal one in the same time, and to check whether a given subsequence is maximal in $O(n)$ time. 
In~\cite{hirota2023fast} the authors considered the problem of finding an MCS of $m>2$ strings of total length $n$, and were able to solve this in $O(mn\log n)$ time and $O(n)$ space.
As for MCS enumeration,~\cite{conte2022enumeration} showed that this task can be performed in $O(n\log n)$ delay and quadratic space.

The automaton approach has been used in literature to deal with subsequence-related problems. The Directed Acyclic Subsequence Graph (DASG) introduced in~\cite{baeza1991searching} is an automaton which accepts all subsequences of a given string $S$. 
Given a set of strings, it can also be generalized to accepting subsequences of any string in the set. 
Later, the common subsequence automata (CSA) was introduced~\cite{crochemore1999directed,crochemore2003directed, tronicek2002common}, which instead accepts \emph{common} subsequences of a set of strings. Such automaton is similar to the common subsequence tree of~\cite{hsu1984computing}, and it can also be used to find a longest common subsequence between two strings~\cite{melichar2003longest}. 
Binary decision diagrams such as ZDD~\cite{Minato93} and SeqBDD~\cite{LoekitoBP10} could potentially be employed to compactly store $\MCS(X,Y)$ but their worst-case behaviour typically involves exponential construction time and space.

We stress that it is not straightforward to efficiently adapt all of these structures to the MCS problem: indeed, figuring out which of the accepted strings are subsequences of each other is not an immediate task. We therefore opted for a different approach, directly defining and constructing an MCS automaton, based on combinatorial properties specific to MCSs.

\noindent\textbf{Preliminaries. }
%\label{section:prelim}
[\textit{String notation}] 
A string $S$ over an alphabet $\Sigma$ (of size $\sigma=|\Sigma|$) is a concatenation of any number of its characters. The empty string is denoted with $\varepsilon$. For $i \in [0,|S|-1]$, character $S[i]$ occurs at position $i$ of string $S$,
% and position $\nextc_S(i) = j$ is the smallest $j > i$ with $S[j]=S[i]$, if it exists (otherwise,  $\nextc_S(i) = |S|-1$). 
and position $\nextc_S(c, i)$ denote the next occurrence of character $c$ after position $i$ in string $S$, if it exists (otherwise,  $\nextc_S(c,i) = |S|-1$). 
The notation $S_{<i}$ indicates $S[0, i-1]$, and $S_{\leq i}$ indicates $S[0,i]$.
    We say that $S$ is a \emph{subsequence} of a string $X$, denoted $S\subset X$, if there exist indices $0\leq i_0 < ... < i_{|S|-1} < |X|$ such that $X[i_k] = S[k]$ for all $k \in [0,|S|-1]$. In this case we also say that $X$ \emph{contains} $S$. Given two strings $X$ and $Y$, a string $S$ is a common subsequence if $S\subset X$ and $S\subset Y$. Furthermore, $S$ is a \emph{Maximal Common Subsequence}, denoted $S \in \MCS(X,Y)$, if it is a common subsequence which is \emph{inclusion-maximal}: there is no string $T \neq S$ such that $S \subset T \subset X,Y$.     
\quad [\textit{Graph notation}]
Given a directed graph $G=(V,E)$, we define the in-neighbors of a node $v\in V$  as $N^-(v) = \{u\in V \ | \ (u,v)\in E\}$, and the out-neighbors as $N^+(v) = \{v\in V \ | \ (v,z)\in E\}$; the in-degree is $d^-(v) = |N^-(v)|$ and the out-degree is $d^+(v) = |N^+(v)|$.
A subgraph of graph $G= (V,E)$ is a graph $H = (W,F)$ such that $W \subseteq V$ and $F \subseteq E$. 
A \emph{path} $P$ in a graph $G= (V,E)$ is a sequence of distinct adjacent nodes: $P = u_1 ... u_k$ where $u_i \in V$ for all $i$, $u_i \neq u_j$ for all $i \neq j$, and $(u_i, u_{i+1})\in E $ for all $i$. $P$ is called an $u_1u_k$-path.  A path where the first and last nodes coincide is called a \emph{cycle}. A directed graph with no cycles is called a directed acyclic graph, or \emph{DAG}. 

\section{Structure of MCSs}
\label{section:cat-alg}

In this section, we show known properties of MCSs which will be at the base of \MCSDAG.
Firstly, we show how to naturally represent all the strings in $\MCS(X,Y)$ for any two given strings $X$ and $Y$, using a finite-state automaton $A$ corresponding to a DAG with a single source $s$ and a single target $t$, where each edge is labeled with a character of the alphabet $\Sigma$; we show that $st$-paths in $A$ have a one-to-one correspondence with the strings in $\MCS(X,Y)$. 
Secondly, we recall and contextualize some useful properties of MCSs proven in~\cite{conte2022enumeration}.

% $\MCSDAG(X,Y)$ with source $s$ and target $t$, whose $st$-paths are in a one-to-one correspondence with the strings in $\MCS(X,Y)$.  After introducing $\MCSDAG(X,Y)$ in Section~\ref{section:automaton-representation}, we detail how to perform its $st$-paths enumeration in constant amortized time (CAT) in Section~\ref{sub:cat-stpaths}. 
% The remainder of the section will describe how to efficiently build an MCS DAG. 
% In the last section, we will discuss the time and space complexity of the presented algorithm. 

\subsection{Modeling MCS as an Exponentially Large DAG}
\label{section:automaton-representation}
% We will start by giving the formal definition of the desired DFA, and some of its properties. From this, we will define the final DAG. 

We here define a deterministic finite-state automaton $A$ that accepts the strings in $\MCS(X,Y)$. It is more convenient to define it directly as an equivalent DAG, using the extended alphabet $\Sigma' = \Sigma \cup \$$, where the dollar sign is a new special character to identify the end of an MCS. 
\begin{definition}
\label{definition:automaton}
The edge-labeled DAG $A = (V\cup \{s,t\},E,\edgelab{\cdot})$
is defined for two input strings $X$ and $Y$ to store their set $\MCS(X,Y)$, as follows:
\begin{itemize}
\item Every node in $V\cup \{s\}$ corresponds to a distinct prefix of a string in $\MCS(X,Y)$, with $s$ called \emph{source node} corresponding to the empty string prefix.
\item For any two nodes $u, u' \in V$, let $P, P'$ be their corresponding two prefixes. Then, $P' = P c$ iff $(u,u') \in E$ with label $\edgelab{u,u'} = c$. 
%That is, an edge from one node to the next means that we can extend the given prefix with the corresponding character label; 
\item For any node $u\in V$, if its corresponding prefix is a whole string $P \in \MCS(X,Y)$, then $(u,t) \in E$ with label $\edgelab{m,t} = \$$. 
%Node $t$ is called the \emph{target node}. %gia detto
\end{itemize}
% Without loss of generality, we use notation $\MCSDAG(X,Y)$ to denote the graph form of the automaton.
\end{definition}

% Finally, we compact the DAG: for each \emph{unary path}, that is, path $u_1...u_k$ such that $N^+(u_i) = \{u_{i+1}\}$ for each $i$, we replace it with the single edge $(u_1,u_k)$, with label $\edgelab{u_1,u_k} = \edgelab{u_1,u_2}\edgelab{u_2,u_3}...\edgelab{u_{k-1},u_k}$.

\begin{figure}
\centering
\begin{tikzpicture}[xscale = 1.5, scale = 0.55]
        \vertex[scale = 2] (s) at (0,0.5) {\tiny $s$};
        \vertex[scale = 2] (1) at (1,2.7) {};
        \vertex[scale = 2] (2) at (1,0.5) {};
        % \vertex[scale = 2] (3) at (1,-3) {};
        \vertex[scale = 2] (4) at (2,3.5) {}; %AC
        \vertex[scale = 2] (5) at (2,0.5) {};
        \vertex[scale = 2] (7) at (3,4) {};
        \vertex[scale = 2] (21) at (3,1) {}; %CCA
        \vertex[scale = 2] (8) at (3,0) {}; % CCG
        \vertex[scale = 2] (10) at (4,4) {};
        \vertex[scale = 2] (18) at (3,3) {}; %ACA
        \vertex[scale = 2] (11) at (4,3) {}; %ACA
        \vertex[scale = 2] (12) at (4,1) {};
        \vertex[scale = 2] (13) at (4,0) {};
        % \vertex[scale = 2] (14) at (4,-2) {};
        \vertex[scale = 2] (16) at (5,4) {};
        \vertex[scale = 2] (17) at (5,3) {};
        \vertex[scale = 2] (19) at (5,1) {};
        \vertex[scale = 2] (20) at (5,0) {};
        % \vertex[scale = 2] (22) at (5,-2) {};
        \vertex[scale = 2] (6) at (2,-2) {}; % 6,9,15,23
        \vertex[scale = 2] (9) at (3,-2) {};
        \vertex[scale = 2] (15) at (4,-2) {};
        \vertex[scale = 2] (23) at (5,-2) {};
        % \vertex[scale = 2] (24) at (5,-4) {};
        
        \vertex[scale = 2] (t) at (7,0.5) {\tiny $t$};

        \path[->] 
        (s) edge[bend left = 10] node [left] {\texttt{A}} (1)
        (s) edge node [above] {\texttt{C}} (2)
        % (s) edge node [left] {\texttt{T}} (3)
        (1) edge[bend left = 8] node [above] {\texttt{C}} (4)
        (2) edge node [above] {\texttt{C}} (5)
        (s) edge[bend right = 15] node [below] {\texttt{T}} (6)
        (4) edge node [above] {\texttt{A}} (7)
        (4) edge node [below] {\texttt{G}} (18)
        (5) edge node [above] {\texttt{A}} (21) % CCA
        (5) edge node [below] {\texttt{G}} (8) % CCG
        (21) edge node [above] {\texttt{G}} (12)
        (6) edge node [above] {\texttt{A}} (9)
        (7) edge node [above] {\texttt{G}} (10)
        % (7) edge node [below] {\texttt{T}} (11)
        (8) edge node [below] {\texttt{A}} (13)
        % (9) edge node [above] {\texttt{A}} (14) %%
        (9) edge node [above] {\texttt{G}} (15)
        (10) edge node [above] {\texttt{G}} (16)
        (18) edge node [below] {\texttt{A}} (11)
        (11) edge node [below] {\texttt{G}} (17)
        (12) edge node [above] {\texttt{G}} (19)
        (13) edge node [below] {\texttt{G}} (20)
        % (13) edge node [below] {\texttt{G}} (21)
        % (14) edge node [above] {\texttt{G}} (22) %%
        (15) edge node [above] {\texttt{G}} (23)
        % (15) edge node [below] {\texttt{G}} (24)
        (16) edge[bend left = 25] node [above] {\texttt{\$}} (t) 
        (17) edge[bend left = 10] node [above] {\texttt{\$}} (t) 
        % (18) edge[bend left = 10] node [above] {\texttt{\$}} (t) 
        (19) edge[bend left = 10] node [above] {\texttt{\$}} (t) 
        (20) edge[bend right = 10] node [above] {\texttt{\$}} (t) 
        % (21) edge[bend right = 5] node [above] {\texttt{\$}} (t) 
        % (22) edge[bend right = 10] node [above] {\texttt{\$}} (t) %%
        (23) edge[bend right = 15] node [above] {\texttt{\$}} (t) 
        % (24) edge[bend right = 25] node [above] {\texttt{\$}} (t) 
        ;
    \end{tikzpicture}
    \hfill
\begin{tikzpicture}[xscale = 1.5, scale = 0.65]
    \vertex[scale = 2] (s) at (0,0) {\tiny $s$};
    \vertex[scale = 2] (1) at (2,0) {};
    % \vertex[scale = 2] (4) at (2,3) {};
    % \vertex[scale = 2] (5) at (2,0) {};
    % \vertex[scale = 2] (6) at (2,-3) {};
    % \vertex[scale = 2] (7) at (3,3) {};
    % \vertex[scale = 2] (8) at (3,0) {};
    % \vertex[scale = 2] (9) at (3,-3) {};
    \vertex[scale = 2] (2) at (3.5,-1) {};
    \vertex[scale = 2] (t) at (5,0) {\tiny $t$};
    \node at (0,-4) {};

    \path[->] 
    (s) edge[bend left = 60] node [above] {\texttt{AC}} (1)
    (s) edge[bend right = 60] node [above]  {\texttt{CC}} (1)
    (s) edge[bend right = 60] node [below] {\texttt{TAG}} (2)
    (1) edge[bend right = 30] node [above] {\texttt{AG}} (2)
    (1) edge[bend left = 50] node [above] {\texttt{GAG\$}} (t) 
    % (2) edge[bend left = 30] node [above] {\texttt{A\$}} (t)
    (2) edge[bend right = 30] node [above] {\texttt{G\$}} (t)
    ;
\end{tikzpicture}
\caption{Left: edge-labeled DAG $A$ for strings $X=$ \texttt{TCACAGAGA} and $Y=$ \texttt{ACCCGTAGG}. We have $\MCS(X,Y) =  \{\texttt{ACAGG}, \texttt{ACGAG}, \texttt{CCAGG}, \texttt{CCGAG}, \texttt{TAGG}\}$; each $st$-path in $A$ corresponds to one of these. For instance, the topmost path corresponds to \texttt{ACAGG}. 
Right: compact $\MCSDAG$ for the same strings.}
\label{fig:MCSDAG}
\end{figure}
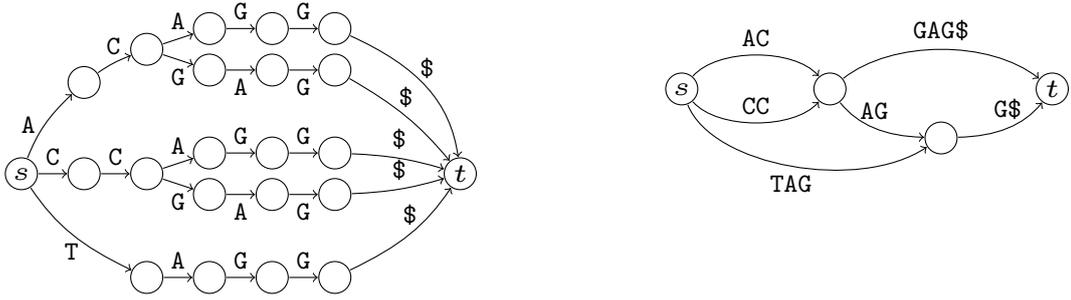

Given DAG $A$, we have a bijection between the strings in $\MCS(X,Y)$ and the labeled $st$-paths (see the left of Figure~\ref{fig:MCSDAG} for an example).
Indeed, it is immediate that any such path corresponds to a string in $\MCS(X,Y)$ due to the way edges are placed. Vice versa, by definition, we cannot have two out-going edges from the same node that share the same label. Therefore, each MCS has exactly one corresponding $st$-path. 
Note that the property of the outgoing labels also implies that $d^+(u) \le  \sigma$ for all $u \in V$, and therefore $|E| \le \sigma |V|$. As the number of nodes in $A$ is $\Omega(|\MCS(X,Y)|)$, its size is exponential in the worst case. 

\begin{remark}
$A$ can be seen as an acyclic deterministic finite-state automaton that accepts the set $\MCS(X,Y)$. We have one state for each prefix of a string in $\MCS(X,Y)$. Given two states $q, q'$, respectively corresponding to prefixes $P, P'$ of strings in $\MCS(X,Y)$, % such that $P$ is prefix of $P'$,
the transition function $\delta$ is given by $\delta(q, c) = q'$ if and only if $P' = P c$.
\end{remark}

% \subsection{Path compression phase}
% One more preprocessing step is needed to achieve efficient MCS enumeration. Namely, to achieve CAT complexity we must ensure that all nodes have out-degree greater than 1.  We do this in linear time in the size of the MCS DAG, in the following way. 

% Let us proceed in topological order of the nodes.
% When a node $v$ with $N^+(v) = \{w\}$ (i.e. out-degree 1) is encountered, we perform the following:
% \begin{enumerate}
%     \item For each $u \in N^-(v)$, remove edge $(u,v)$ and add edge $(u, w)$ with label $\edgelab{u,v}\edgelab{v,w}$.
%     \item After all in-neighbors have been processed, remove node $v$ and edge $(v,w)$ from the graph. 
% \end{enumerate}
% Let \MCSDAGcompr be the graph obtained after this procedure. Clearly, the minimum out-degree of this graph is 2.
% % such graph has no nodes of out-degree 1.
% Furthermore, we note that nodes $s$ and $t$ are never removed. It is immediate to see that labeled $st$-paths in \MCSDAG are the same as labeled $st$-paths in \MCSDAGcompr.

\subsection{Known Concepts on MCS Enumeration}
\label{sec:previous-algo}
We now present a summary of some concepts introduced in~\cite{conte2022enumeration}, which will be used for our results. %algorithm \oldalg. 
Let string $P$ be called a \emph{valid prefix} if there exists a string $Q$ such that their concatenation is $PQ\in \MCS(X,Y)$. Given a valid prefix $P$, the set of characters $c$ such that $P \, c$ is still a valid prefix are called \emph{valid extensions}. 
% A key ingredient is set of \emph{candidate extensions}, intuitively a subset of $\Sigma$ that makes sense to consider to extend $P$; this is computed in~\cite{conte2022enumeration} from $\Ext_{l,m}\subseteq \{0,...,n\} \times \{0,...,n\}$, which has size $O(\sigma)$, and can be found in $O(\sigma \log n)$ time (see~\cite[Section 2.3]{conte2022enumeration}). More importantly, it depends only on the values of $l$ and $m$ and the given $X, Y$.
% only depends on the values of $l$ and $m$ and the given $X, Y$; more details are beyond the scope of this paper, but they can be found in~\cite[Section 2.3]{conte2022enumeration}, along with a method showing how to get it in $O(\sigma \log n)$ time. 

Finding valid extensions of $P$ is not straightforward.
Let $P$ be a valid prefix, and let $X_{\le l}$ and $Y_{\le m}$ be respectively the shortest prefixes of $X$ and $Y$ that contain $P$ as a subsequence. Consider a pair of positions $i > l$ and $j > m$, respectively in $X$ and $Y$, corresponding to the same character, say,~$c \in \Sigma$. In order to assess whether $c$ is a valid extension, checking if $P \, c$ is a maximal common subsequence of $X_{\le i}$ and $Y_{\le j}$ is necessary, but not sufficient:

\begin{example}
\label{exa:pitfall}
Consider $X = \texttt{TCACAG}$ and $Y = \texttt{TACGAT}$, with $\MCS(X,Y ) = \{\texttt{TACA}, \texttt{TACG}\}$.

\noindent\begin{minipage}{.3\textwidth}
    \begin{tikzpicture}[square/.style={regular polygon,regular polygon sides=4},x=.6cm, y=1cm, scale = 0.9]
\node at (-1,1) {$X$};
    \prefixvertex[scale = 1.5] (xbeg) at (0,1) [label = above:  $l$] {\tiny \texttt{T}};
    \prefixvertex[scale = 1.5] (x1) at (1,1) [label = above:  $i$] {\tiny \texttt{C}};
    \vertex[scale = 1.5] (x2) at (2,1) {\tiny \texttt{A}};
    \vertex[scale = 1.5] (x3) at (3,1) {\tiny \texttt{C}};
    \vertex[scale = 1.5] (x4) at (4,1) {\tiny \texttt{A}};
    \vertex[scale = 1.5] (x5) at (5,1)   {\tiny \texttt{G}};
    
\node at (-1,0) {$Y$};  
    \prefixvertex[scale = 1.5] (ybeg) at (0,0) [label = below:  $m$] {\tiny \texttt{T}};
    \vertex[scale = 1.5] (y1) at (1,0) {\tiny \texttt{A}};
    \prefixvertex[scale = 1.5] (y2) at (2,0) [label = below:  $j$] {\tiny \texttt{C}};
    \vertex[scale = 1.5] (y3) at (3,0)  {\tiny \texttt{G}};
    \vertex[scale = 1.5] (y4) at (4,0) {\tiny \texttt{A}};
    \vertex[scale = 1.5] (y5) at (5,0)  {\tiny \texttt{T}};
   
    \path
        % (xbeg) edge[color = black, line width = 1pt] (ybeg)
        % (x1) edge[color = black, line width = 1pt] (y2)
        % (x3) edge[color = black, line width = 1pt] (y3)
        % (x4) edge[dashed, color = black, line width = 1pt] (y4)
        % (x6) edge[dashed, color = black, line width = 1pt] (y5)
        % (x5) edge[dashed, color = black, line width = 1pt] (y7)
        % (x7) edge[dashed, color = black, line width = 1pt] (y6)
        % 
        ;
\end{tikzpicture}
\end{minipage}
\begin{minipage}{.66\textwidth}
     Consider the valid prefix $\T$, for which $l=0$ and $m=0$. Let $i=1$ and $j = 2$: these correspond to character $\C$, and clearly $\T\C \in \MCS(X_{\le i}, Y_{\le j}) = \MCS(\texttt{TC}, \texttt{TAC})$. Still, $\C$ is not a valid extension of valid prefix $\T$, as $\T\C$ is not a prefix of any MCS. 
\end{minipage}
\end{example}

% see Figure~\ref{fig:swings} as a counterexample.
% One may think that valid extensions could be identified by considering the next occurrences $i = \nextc_X(c,l)$ and $j = \nextc_Y(c,m)$ of a character $c \in \Sigma $ after positions $l$ and $m$, such that no other $d \in \Sigma$ occurs before both $i$ and $j$. Unfortunately, this does not suffice: see the left of Figure~\ref{fig:swings} as a couterexample. 
To circumvent this problem, the set $\Ext_{l,m}$ of \emph{candidate extensions} is defined in~\cite{conte2022enumeration}: this is a set of pairs of positions $(i,j)$, with $i \ge l$ and $j \ge m$, whose definition relies solely on the pair $(l,m)$. The membership of $(i,j)$ to $\Ext_{l,m}$ completes the characterization of valid extensions (see Theorem 3 in~\cite{conte2022enumeration}): 
% which can be computed in $O(\sigma \log n)$; we omit the detailed procedure which is not relevant, but it can be found in~\cite[Section 2.3]{conte2022enumeration}. Intuitively, $\Ext_{l,m}$ will contain one edge for each possible extending character in $\Sigma$, as long as it respects some combinatorial properties.
% Valid extensions are characterized as follows (see Theorem 3 in~\cite{conte2022enumeration}):
\begin{itemize}
    \item Let $P$ be a valid prefix, and let $X_{\le l}$ and $Y_{\le m}$ be respectively the shortest prefixes of $X$ and $Y$ that contain $P$ as a subsequence.
    % of some $M \in \MCS(X,Y)$, 
    % with leftmost mapping $L_P$ ending with edge $(l,m)$. 
    \item Then $P\, c$ is a valid prefix if and only if the following two conditions hold:
    \begin{enumerate}
        \item \label{condition:swings} There exists $i$ and $j$ such that $c = X[i] = Y[j]$ and  $P \in \MCS(X_{<i}, Y_{<j})$;
        \item \label{condition:ext} This pair of positions satisfies $(i,j) \in \Ext_{l,m}$.
    \end{enumerate} 
\end{itemize}
More details on the construction of set $\Ext_{l,m}$ are beyond the scope of this paper, but they can be found in~\cite[Section 2.3]{conte2022enumeration}, along with a  $O(\sigma \log n)$ time method for its computation. In Example~\ref{exa:pitfall}, it is $(i,j) \not \in \Ext_{l,m}$.

\begin{figure}
    \centering
\begin{tikzpicture}[square/.style={regular polygon,regular polygon sides=4},x=.6cm, y=1cm, scale = 0.9]
\node at (-1,1) {$X$};
% \vertex[scale = 1.5] (xpre) at (-1,1) {\tiny \texttt{G}};
    \prefixvertex[scale = 1.5] (xbeg) at (0,1) {\tiny \texttt{T}};
    \prefixvertex[scale = 1.5] (x1) at (1,1) {\tiny \texttt{A}};
    \vertex[scale = 1.5] (x2) at (2,1) {\tiny \texttt{T}};
    \prefixvertex[scale = 1.5] (x3) at (3,1) [label = above: $l$] {\tiny \texttt{C}};
    \vertex[scale = 1.5] (x4) at (4,1) {\tiny \texttt{G}};
    \vertex[scale = 1.5] (x5) at (5,1)   {\tiny \texttt{A}};
    
    \vertex[scale = 1.5] (x6) at (6,1)  [label = above: \small $\ltimes_T$] {\tiny \texttt{C}};
    \vertex[scale = 1.5] (x7) at (7,1) {\tiny \texttt{T}};
    \vertex[scale = 1.5] (x8) at (8,1)  {\tiny \texttt{C}};
    % \vertex[scale = 1.5] (x9) at (9,1)  {\tiny \texttt{A}};
    
    % \node at (9,1) [square,draw, scale = 0.5, color = green!70!black, label = above: \color{green!70!black} $\ltimes_T$] (x9) {};
\node at (-1,0) {$Y$};  
    \prefixvertex[scale = 1.5] (ybeg) at (0,0) {\tiny \texttt{T}};
    \vertex[scale = 1.5] (y1) at (1,0) {\tiny \texttt{G}};
    \prefixvertex[scale = 1.5] (y2) at (2,0) {\tiny \texttt{A}};
    \prefixvertex[scale = 1.5] (y3) at (3,0) [label = below:  $m$] {\tiny \texttt{C}};
    \vertex[scale = 1.5] (y4) at (4,0) {\tiny \texttt{G}};
    \vertex[scale = 1.5] (y5) at (5,0)  {\tiny \texttt{C}};
    \vertex[scale = 1.5] (y6) at (6,0)  {\tiny \texttt{T}};
    \vertex[scale = 1.5] (y7) at (7,0) {\tiny \texttt{A}};
    \vertex[scale = 1.5] (y8) at (8,0)  [label = below:  \small $\ltimes_B$]  {\tiny \texttt{C}};
    % \vertex[scale = 1.5] (y9) at (9,0)  {\tiny \texttt{G}};
    
    \path
        % (xbeg) edge[color = black, line width = 1pt] (ybeg)
        % (x1) edge[color = black, line width = 1pt] (y2)
        % (x3) edge[ color = black, line width = 1pt] (y3)
        (x6) edge[color = black, line width = 1pt] (y3)
        (x5) edge[dashed, color = black, line width = 1pt] (y2)
        (x4) edge[dashed, color = black, line width = 1pt] (y1)
        ;
    
    % \draw[->] (x3) edge[dashed, bend left = 30] (x7);
                
\end{tikzpicture}
\hspace{1cm}
\begin{tikzpicture}[square/.style={regular polygon,regular polygon sides=4},x=.6cm, y=1cm, scale = 0.9]
    \vertex[scale = 1.5] (xbeg) at (0,1) {\tiny \texttt{T}};
    \vertex[scale = 1.5] (x1) at (1,1) {\tiny \texttt{C}};
    \vertex[scale = 1.5] (x2) at (2,1) {\tiny \texttt{A}};
    \vertex[scale = 1.5] (x3) at (3,1) {\tiny \texttt{C}};
    \vertex[scale = 1.5] (x4) at (4,1) {\tiny \texttt{A}};
    \vertex[scale = 1.5] (x5) at (5,1)  [label = above: \color{red} $l$] {\tiny \texttt{G}};
    
    \vertex[scale = 1.5] (x6) at (6,1)   {\tiny \texttt{A}};
    \vertex[scale = 1.5] (x7) at (7,1) {\tiny \texttt{T}};
    \vertex[scale = 1.5] (x8) at (8,1) [label = above: \color{red}  \small $\ltimes_T$] {\tiny \texttt{G}};
    
    % \node at (9,1) [square,draw, scale = 0.5, color = green!70!black, label = above: \color{green!70!black} $\ltimes_T$] (x9) {};
    
    \vertex[scale = 1.5] (ybeg) at (0,0) {\tiny \texttt{A}};
    \vertex[scale = 1.5] (y1) at (1,0) {\tiny \texttt{C}};
    \vertex[scale = 1.5] (y2) at (2,0) {\tiny \texttt{T}};
    \vertex[scale = 1.5] (y3) at (3,0) {\tiny \texttt{C}};
    \vertex[scale = 1.5] (y4) at (4,0) {\tiny \texttt{T}};
    \vertex[scale = 1.5] (y5) at (5,0) [label = below: \color{red} $m$] {\tiny \texttt{G}};
    \vertex[scale = 1.5] (y6) at (6,0)  {\tiny \texttt{G}};
    \vertex[scale = 1.5] (y7) at (7,0) {\tiny \texttt{T}};
    \vertex[scale = 1.5] (y8) at (8,0) {\tiny \texttt{A}};
    \vertex[scale = 1.5] (y9) at (9,0) [label = below: \color{red} \small $\ltimes_B$] {\tiny \texttt{G}};
    % \node at (8,0) [square,draw, scale = 0.5, color = green!70!black,label = below: \color{green!70!black} $\ltimes_B$] (y8) {};
    \path
        (xbeg) edge[color = blue, line width = 1pt] (y2)
        (x1) edge[color = blue, line width = 1pt] (y3)
        (x2) edge[dashed, color = orange, line width = 1pt] (ybeg)
        (x3) edge[dashed, color = orange, line width = 1pt] (y1)
        (x5) edge[color = red, line width = 1.5] (y5)
        % (x7) edge[dashed] (y4)
        (x7) edge[dotted, color = green!70!black, line width = 0.8pt] (y7)
        (x6) edge[dotted, color = green!70!black, line width = 0.8pt] (y8);
    
    % \draw[->] (x3) edge[dashed, bend left = 30] (x7);
                
\end{tikzpicture}
    \caption{
    % Left: Consider strings $X = \texttt{TCACAG}$ and $Y = \texttt{TACGAT}$. The only MCS are $\texttt{TACA}$ and $\texttt{TACG}$. Consider $i,j$ as in the figure, corresponding to character $\C$: clearly, $\T\C \in \MCS(X_{\le i}, Y_{\le j}) = (\texttt{TC}, \texttt{TAC})$. Still, $\C$ is not a valid extension of valid prefix $\T$, as $\T\C$ is not a prefix of any MCS. 
    Left: Swings $(\ltimes_T, \ltimes_B)$ for valid prefix $\texttt{TAC}$ in strings $X=\texttt{TATCGACTC}$ and $Y = \texttt{TGACGCTAC}$. Consider swing $\ltimes_T$: $\texttt{TAC}\not\in \MCS(X_{\le \ltimes_T}$, $Y_{\le m})$ since  $\texttt{TGAC}$ is a common subsequence (dashed).
    Right: Two prefixes, \texttt{TCG} (solid blue) and \texttt{ACG} (dashed orange), both ending at solid red positions $(l,m)$, and having the same swings $(\ltimes_T, \ltimes_B)$. The valid extensions are the same (dotted green): \texttt{A} and \texttt{T}.}
    \label{fig:swings}
\end{figure}
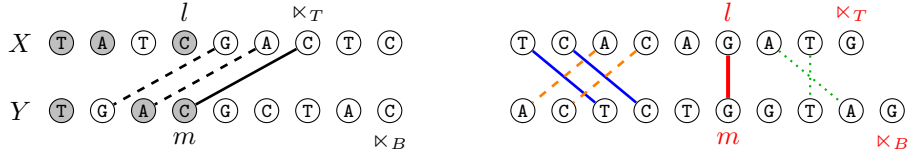

% \subsection{Swings for Maximality Check}
The notion of \emph{swings} is a key concept to quickly verify the first condition above. Swings characterize the amount by which we can ``move'' a given character occurrence while retaining maximality (see Figure~\ref{fig:swings}, right). 
Let $X_{\le l}$ and $Y_{\le m}$ be respectively the shortest prefixes of $X$ and $Y$ that contain $P$ as a subsequence. 
% a leftmost mapping for some string $P$, ending with edge $(l,m)$. 
The \emph{swing} of $P$, denoted $\ltimes(P)$, is a pair of integers  $\ltimes_T(P)$ and $\ltimes_B(P)$, called respectively \emph{top} and \emph{bottom} swings, given by
\begin{itemize}
    \item $\ltimes_T(P) = \min \{ i> l \ | \ P \not \in \MCS(X_{\le i},Y_{\le m})\}$ %= \max \{ i> l \ | \ P \in \MCS(X_{< i},Y_{\le m})\}
    \item $\ltimes_B(P) = \min \{j>m \ | \ P \not \in \MCS(X_{\le l},Y_{\le j})\}$ %= \max \{ j> m \ | \ P \in \MCS(X_{\le l},Y_{< j})\}
\end{itemize}
It follows from the definition that $P \in \MCS(X_{< i}, Y_{< j}) \iff i \le \ltimes_T (P) \text{ and } j \le  \ltimes_B(P)$. Given the swings, condition~\ref{condition:swings} can thus be checked in constant time~\cite{conte2022enumeration}. 

% \paragraph*{Incremental Computation of Swings}
% Therefore, we can check in constant time if the swings are computed as the algorithm proceeds.
Updating the swings when a character is added can be done in $O(\sigma)$ time, provided that we can find in constant time the next occurrence of a character $c$ after a given position in the strings; more importantly, this update does \textit{not} need knowledge of the whole prefix, but just the positions its final character ($l_N,m_N$) and their current swings. 

% It is now crucial to observe that $\Ext_{l,m}$ and the swings are all that is necessary to compute the valid extensions of $P$, and they do \textit{not} depend on $P$. Specifically:

% \begin{remark}
%     The valid extensions of a prefix $P$ can be computed from just knowledge of $l_N,m_N$ and their swings.
% \end{remark}

% % \subsection{Algorithm \oldalg}
% Algorithm \oldalg exploits the two conditions above and the notion of swings: at each step, the leftmost mapping $L$ of a valid prefix $P$ and its swings $\ltimes (L)$ are given. The algorithm computes set $\Ext_P$, and for every character $c$ corresponding to an edge $(i,j)$ in $\Ext_P$: it checks if $i \le \ltimes_T(L)$ and $j \le \ltimes_B(L)$; if so, it computes the leftmost mapping and the swings for string $P\, c$, and recurs. If no characters passed this check, the algorithm outputs $P$ as an MCS.

% All MCS can be output if we start from the empty string $\varepsilon$ as prefix (empty leftmost mapping and swings $\ltimes(\varepsilon) = (|X|,|Y|)$). Overall, the algorithm requires a preprocessing of $O(n^2\log n)$ time, and it has a delay of $O(\sigma n \log n)$. The space employed is quadratic. 
% % every step requires $O(\sigma \log n)$ time, and at most $n$ steps are required to generate an $MCS$.

\section{Polynomial-Size \MCSDAG}
\label{section:MCSDAG-construction}
% The key part our algorithm is how to construct an MCS DAG, given two input strings $X,Y$. In the rest of the paper, we will show how this can be done in $O(n^3 \sigma \log n )$ time and $O(n^3)$ space, concluding the proof of Theorem~\ref{thm:main-CAT}. 
The construction of DAG $A$ satisfying the conditions of Definition~\ref{definition:automaton} would require exponential time and space: the number of nodes of $A$ is between  $\Omega(|\MCS(X,Y)|)$ and $O(n|\MCS(X,Y)|)$, so it can be exponential in $n$. In this section, we show how to obtain $\MCSDAG$, where we still have a bijection between $st$-paths and MCS, but which can instead always be constructed in $O(n^3 \sigma \log n )$ time and $O(n^3\sigma)$ space, as per Theorem~\ref{thm:main-CAT}. Intuitively, the relevant information discussed for $A$
are the quadruples $(l,m, t, b)$, where $X_{\le l}$ and $Y_{\le m}$ are some prefixes and pair $t, b$ is some swing: these quadruples are the candidates for being nodes in $\MCSDAG$.

% After giving an overview of the needed concepts in Section~\ref{sec:previous-algo}, we will present our algorithm for constructing our $\MCSDAG$ in Section~\ref{section:build-mcs-dag}. The complexity bounds for the construction will finally be presented in Section~\ref{section:mcsdag-size}. 
The formal definition of \MCSDAG is based on an equivalence relation over the nodes of $A$, given in Section~\ref{section:mcsdag-eqrel}. 
Afterwards, in Section~\ref{section:mcsdag-construction}, we describe an algorithm for \textit{directly} constructing $\MCSDAG$. We present the complexity bounds for the construction in Section~\ref{section:mcsdag-size}. 

\subsection{Equivalence Relation for Defining \MCSDAG}
\label{section:mcsdag-eqrel}
Let $A$ be a DAG as defined in Definition~\ref{definition:automaton}.
Our construction algorithm for $\MCSDAG$ is based on the concepts from Section~\ref{sec:previous-algo}. 
The idea is to use the characterization of valid extensions to identify the out-neighbors of a given node of DAG $A$. 
% Indeed, given a prefix $P$, the corresponding recursive call of \oldalg allows us to identify, in $O(\sigma \log n)$ time, all the characters that extend $P$ in a valid way. In our MCS DAG, this means identifying all of the out-neighbors of the corresponding node. 
%
%The incremental construction could proceed by extending all the prefixes, but it would require
%
% proceeds as follows. Start with node $s$. At the generic step, pick a node $u$ with out-neighbors to discover, and let $P$ be the corresponding prefix. Compute the valid extensions using swings and the set $Ext$. For each valid extension $c$, add a new neighboring node corresponding to prefix $Pc$, connected to $u$ through an edge labeled with $c$. The problem with this construction is that the graph size is necessarily $|\MCS(X,Y)|$, which can be exponential in the length of the input strings. Therefore, the graph building phase could require 
%
%exponential time and space, in the worst case.  We follow a different path and 
We identify an equivalence relation over the prefixes of $\MCS(X,Y)$, and thus on the nodes of $A$, that allow us to always build $\MCSDAG$ in polynomial time and space.  
We begin with the following lemma:
\begin{lemma}
\label{lemma:equal-extensions}
    Given any valid prefix $P$, let $X_{\le l}$ and $Y_{\le m}$ be the shortest prefixes containing $P$, and
    $\ltimes(P) = \langle t, b \rangle$ be its swing. Consider another valid prefix $P'\neq P$ with the \emph{same} shortest prefixes $X_{\le l}, Y_{\le m}$ and swing $\ltimes(P') = \ltimes(P)$ as $P$. Then, the set of valid extensions is the same for both $P$ and $P'$, and for each valid extension $c \in \Sigma$, the swings of $P\, c$ are the same as the ones of $P'\, c$. 
\end{lemma}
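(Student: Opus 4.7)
The plan is to show that both the determination of valid extensions of a prefix $Q$ and the computation of the swings of the extended prefix $Q\,c$ depend on $Q$ solely through the triple $(l, m, \ltimes(Q))$. Since $P$ and $P'$ agree on this triple by hypothesis, both claims will follow at once.

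First, I will rewrite the two-part characterization of valid extensions from Section~\ref{sec:previous-algo} purely in terms of this triple. Using the equivalence $P \in \MCS(X_{<i}, Y_{<j}) \iff i \le \ltimes_T(P)$ and $j \le \ltimes_B(P)$, Condition~\ref{condition:swings} becomes: there exist indices $i,j$ with $X[i] = Y[j] = c$, $i \le t$, and $j \le b$. Condition~\ref{condition:ext} asks that $(i,j) \in \Ext_{l,m}$, and $\Ext_{l,m}$ is by construction in~\cite{conte2022enumeration} a function of $(l,m)$ alone. Both conditions therefore reference nothing but $(l,m)$, $(t,b)$ and the fixed strings $X, Y$. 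Since $(l,m)$ and $(t,b)$ coincide for $P$ and $P'$, the sets of valid extensions must be identical.

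Second, fix any valid extension $c$. I would observe that the ending positions of $P\,c$ in $X$ and $Y$ are $l' = \nextc_X(c,l)$ and $m' = \nextc_Y(c,m)$, which depend only on $(l,m,c)$ and hence agree for $P\,c$ and $P'\,c$. Then I would invoke the update rule recalled at the end of Section~\ref{sec:previous-algo}: the new swings $\ltimes(P\,c)$ can be computed from the old ending positions $(l, m)$, the old swings $\ltimes(P) = \langle t, b\rangle$, and the appended character $c$, with no further knowledge of $P$. All three inputs match between $P$ and $P'$, so $\ltimes(P\,c) = \ltimes(P'\,c)$.

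The only subtle point is verifying that neither the valid-extension test nor the swing-update rule secretly consults some other part of $P$; this locality is precisely the content of the cited results from~\cite{conte2022enumeration}, so the lemma ultimately reduces to careful bookkeeping rather than a new combinatorial argument.
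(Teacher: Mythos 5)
Your proposal is correct and follows essentially the same route as the paper's proof: both arguments reduce the valid-extension test to the pair $(l,m)$ (via $\Ext_{l,m}$) together with the swings $(t,b)$, and both observe that the new ending positions and the swing update (old swings combined with the personal swing of the new positions) depend only on $(l,m,t,b,c)$ and not on the rest of the prefix. No gaps; your explicit restatement of the swing condition as $i \le t$, $j \le b$ is just a slightly more detailed rendering of the same bookkeeping.
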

\begin{proof}
The definition of $\Ext_{l,m}$ only depends on the value of $l$ and $m$, therefore such set is the same for both $P$ and $P'$. Since the swings form $P$ and $P'$ are equal, the set of valid extensions is necessarily the same. 
Let now $c \in \Sigma$ be a valid extension for $P$ and $P'$. Let $X_{\le l_c}$ and $Y_{\le m_c}$ respectively be the shortest prefixes of $X$ and $Y$ containing $P\, c$. These are also the shortest prefixes containing $P' \, c$: the shortest prefixes containing $P$ and $P'$ were the same, and $l_c$ is simply the first occurrence of $c$ after $l$, analogously for $m_c$ and $m$. 
%Recall now the incremental computation of swings: 
The swings of $P \, c$ are given by the minimum of the swings of $P$, and the personal swing $\ltimes{(l_N, m_N)}$ obtained by adding the new character $c$. The latter personal swing is the same for both $P \, c$ and $P'\, c$, since we are considering the same positions $l_c,m_c$. Since the previous swings where also equal, this means that the swings of $P \,c $ and $P' \, c$ are indeed the same. 
\end{proof}

Lemma~\ref{lemma:equal-extensions} has an implication for prefixes $P$ and $P'$ that share the \textit{same} swing: if $M_1,...,M_N$ are strings extending as $PM_i \in \MCS(X,Y)$, and $M'_1,...,M'_M$ extending as $P'M'_i \in \MCS(X,Y)$, then they are equal: $\{M_i \ | \ i = 1,...,N\}=\{M'_i \ | \ i = 1,...,M \}$.

Given a node $u$ of $A$, let $P$ be the corresponding prefix.  We assign $u$ the quadruple of parameters $\nodelab{u} = \langle l,m,\tswing{P}, \bswing{P}\rangle$, where $l$ and $m$ are such that $X_{\le l}$ and $Y_{\le m}$ are the shortest prefixes containing $P$. 
By Lemma~\ref{lemma:equal-extensions}, this tuple completely identifies the valid extensions of $P$, which means that it completely identifies the neighbors of node $u$. 
\begin{corollary}
\label{corollary:same-neighbors}
    Let $u \neq u'$ with $\nodelab{u} = \nodelab{u'}$. Then, for each $v \in N^+(u)$ there exists exactly one $v' \in N^+(u')$ such that $\nodelab{v} = \nodelab{v'}$ and $\edgelab{u,v} = \edgelab{u',v'}$.
\end{corollary}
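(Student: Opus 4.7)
The plan is to read off Corollary~\ref{corollary:same-neighbors} essentially directly from Lemma~\ref{lemma:equal-extensions}, which already does the heavy lifting. First I would unpack notation: let $P$ and $P'$ be the distinct prefixes of strings in $\MCS(X,Y)$ corresponding to $u$ and $u'$ respectively. Writing $\nodelab{u} = \langle l, m, t, b \rangle$, the equality $\nodelab{u} = \nodelab{u'}$ says that $X_{\le l}$ and $Y_{\le m}$ are the shortest prefixes containing both $P$ and $P'$, and that $\ltimes(P) = \ltimes(P') = \langle t, b\rangle$.

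Next I would translate the structure of $N^+(u)$ in $A$ back to valid extensions. By Definition~\ref{definition:automaton}, out-edges of $u$ are in bijection with valid extensions of $P$: each $v \in N^+(u)$ corresponds to the prefix $P\, c$ for a unique valid extension $c$ of $P$, with $\edgelab{u,v} = c$. The analogous statement holds for $u'$. Lemma~\ref{lemma:equal-extensions} gives that the set of valid extensions of $P'$ coincides with that of $P$, so for the character $c = \edgelab{u,v}$ there exists in $N^+(u')$ a node $v'$ whose edge carries the same label $c$, namely the node corresponding to $P'\, c$. Uniqueness of $v'$ is automatic because by Definition~\ref{definition:automaton} no two out-edges from the same node of $A$ share a label.

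It then remains to argue $\nodelab{v} = \nodelab{v'}$. Writing $\nodelab{v} = \langle l_c, m_c, t_c, b_c\rangle$ and $\nodelab{v'} = \langle l'_c, m'_c, t'_c, b'_c\rangle$, the first two coordinates are the shortest-prefix endpoints for $P\, c$ and $P'\, c$; but as noted in the proof of Lemma~\ref{lemma:equal-extensions}, these are determined solely by $(l,m,c)$ via $l_c = \nextc_X(c,l)$ and $m_c = \nextc_Y(c,m)$, so they agree. The last two coordinates are the swings $\ltimes(P\,c)$ and $\ltimes(P'\,c)$, which the lemma asserts are equal. This yields $\nodelab{v} = \nodelab{v'}$ and closes the argument.

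I do not anticipate a serious obstacle here: the whole content of the corollary is packaged inside Lemma~\ref{lemma:equal-extensions}, and the only extra ingredient is the label-uniqueness property baked into Definition~\ref{definition:automaton}, which supplies both existence and uniqueness of $v'$. The one subtlety worth stating explicitly is that $\nodelab{u} = \nodelab{u'}$ is really the conjunction of ``same shortest prefix endpoints $(l,m)$'' and ``same swing $(t,b)$'', so that both hypotheses of Lemma~\ref{lemma:equal-extensions} are indeed satisfied.
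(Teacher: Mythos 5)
Your proof is correct and takes the same route the paper intends: the paper states this corollary without an explicit proof, treating it as an immediate consequence of Lemma~\ref{lemma:equal-extensions} (equal extension sets, equal resulting shortest-prefix positions and swings) plus the label-uniqueness of out-edges in Definition~\ref{definition:automaton}, which is exactly what you spell out.
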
 

Therefore, we can define the following \emph{equivalence relation} on the nodes of $A$: $u \sim u'$
if and only if $\nodelab{u} = \nodelab{u'}$. We can then identify a class of equivalent nodes in the DAG, choosing one representative for it. Because of Corollary~\ref{corollary:same-neighbors}, this does not change the set of labeled $st$-paths of the DAG: the nodes that are identified as one have the same labelled out-edges, leading to the same out-neighbors. Our data structure \MCSDAG is then defined as the DAG resulting from this identification:
\begin{definition}
\label{definition-MCSDAG}
Data structure $\MCSDAG$ is a node- and edge-labelled DAG  built as follows:
\begin{enumerate}
\item Start from DAG $A$ (Definition~\ref{definition:automaton}). For each node $u$, consider its (unique) corresponding prefix $P$, and let $X_{\le l}$ and $Y_{\le m}$ be the shortest prefixes of $X$ and $Y$ containing $P$, and $\ltimes(P) = (t,b)$. Assign to node $u$ the node-label $\nodelab{u}=\recID{l}{m}{t}{b}$. \label{item:MCSDAG-start}
\item Merge every pair of nodes $u \neq u'$ with the same label $\nodelab{u} = \nodelab{u'}$ into one node.  \label{item:MCSDAG-identify}
% \item For each \emph{unary path}, that is, path $u_1...u_k$ such that $N^+(u_i) = \{u_{i+1}\}$ for each $i$, substitute it with the single edge $(u_1,u_k)$, with label $\edgelab{u_1,u_k} = \edgelab{u_1,u_2}\edgelab{u_2,u_3}...\edgelab{u_{k-1},u_k}$. \label{item:MCSDAG-compress}
\end{enumerate}
An example of such DAG is shown in the right of Figure~\ref{fig:MCSDAG}.
\end{definition}

The \textsf{compact} \MCSDAG is the version with compressed \emph{unary paths}. Each such path $u_1...u_k$ has $N^+(u_i) = \{u_{i+1}\}$ for each $i$, and is replaced by the single edge $(u_1,u_k)$, with label $\edgelab{u_1,u_k} = \edgelab{u_1,u_2}\edgelab{u_2,u_3}...\edgelab{u_{k-1},u_k}$.
Note that compressing unary paths does not change the set of labeled $st$-paths of the DAG, which still correspond to $\MCS(X,Y)$.

\subsection{Direct and Incremental Construction of \MCSDAG}
\label{section:mcsdag-construction}
% During our c means that nodes with the same recursion ID can be merged into one. Therefore, we will  keep such recursion IDs as node labels, making sure that no two nodes have the same label.
% our index in two phases: the first phase will build a graph $G$, equivalent to performing~\ref{item:MCSDAG-start} and~\ref{item:MCSDAG-identify} of Definition~\ref{definition-MCSDAG}. Then, we will traverse the graph once more to remove unary paths, enforcing condition~\ref{item:MCSDAG-compress} and thus yielding \MCSDAG. 
% The first phase is given by procedure \graphbuild{},
We build $\MCSDAG$ directly, without the intermediate DAG $A$, We apply the incremental procedure below, in a DFS fashion, using the node $ID$s to avoid repeated computation.
% to incrementally build $\MCSDAG$ in a DFS fashion, using the node $ID$s to avoid repeated computation.
% by following a simplified version of \oldalg's recursion scheme, through the recursive function \graphbuild{}. 
At any moment we have built a node- and edge-labeled DAG $H = (W,F)$.
% , which is a subgraph of the final $\MCSDAG(X,Y)$. 
A node $u\in W$ corresponds to a set of prefixes $P_1,...,P_k$, given by the concatenation of the edge-labels of all $su$-paths using edges of $F$. All prefixes $P_i$ share the same ending positions of the shortest prefixes of $X$ and $Y$ that contain them, and the corresponding swings; these four values form the label $\nodelab{u}$ assigned to $u$.
% , composed of the ending positions of the shortest prefixes of $X$ and $Y$ that contain $P$, and its corresponding swings.  

Every recursive call \graphbuild{u} takes as input a node $u$ which belongs to the current DAG $H$, and expands DAG $H$ accordingly as follows:
\begin{enumerate}
    \item Let $\nodelab{u} = \recID{l}{m}{t}{b}$. First, compute set $\Ext_{l,m}$, and use it to compute the valid extensions: select characters $c$ that have a corresponding pair $(i,j) \in \Ext_{l,m}$, with $i \le t$ and $j \le b$.
    % that $$ if the corresponding characters satisfy the swings' condition~\ref{condition:swings}. 

    \item For such character $c$, compute the positions $(l_c,m_c)$ such that $X_{\le l_c}$ and $Y_{\le m_c}$ are the shortest prefixes containing $P\, c$, and update the swings $t_c, b_c$. 

    \item Now, check if the DAG $H$ generated so far already has a node with label $\recID{l_c}{m_c}{t_c}{b_c}$:
    \begin{enumerate}
        \item If such a node $v\in W$ exists, then simply add edge $(u,v)$ with label $c$ to the edges $F$ of $H$, without recursing. 
        Indeed, a recursive call for $v$ has been previously performed.  

        \item Otherwise, add node $v$ to $W$, with $\nodelab{v} = \recID{l_c}{m_c}{t_c}{b_c}$, and add edge $(u,v)$ to $F$, with label $c$. Then, perform the recursive call \graphbuild{v}. 
    \end{enumerate}

\end{enumerate}

% The correctness of the procedure follows from the characterization of valid extensions, together with Lemma~\ref{lemma:equal-extensions} and Corollary~\ref{corollary:same-neighbors}:  
\begin{corollary}[Correctness]
$\graphbuild{s}$ correctly builds $\MCSDAG(X,Y)$ starting from $H= (\{s\}, \emptyset)$. 
\end{corollary}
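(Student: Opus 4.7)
The plan is to prove, by induction on the depth of the DFS recursion, two invariants that hold throughout the construction: (i) whenever a node $v$ is added to $W$ with $\nodelab{v}=\recID{l}{m}{t}{b}$, every $sv$-path in the current $H$ spells a valid prefix of some string in $\MCS(X,Y)$ whose shortest containing prefixes of $X,Y$ end at positions $l,m$ and whose swings are $t,b$; (ii) once $\graphbuild{v}$ has returned, the subDAG reachable from $v$ in $H$ is isomorphic, as a node- and edge-labelled DAG, to the subDAG reachable from the corresponding node of $\MCSDAG(X,Y)$ in Definition~\ref{definition-MCSDAG}. The base case is immediate: $s$ encodes the empty prefix $\varepsilon$, matching the source of $\MCSDAG(X,Y)$.

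For the inductive step on $\graphbuild{u}$ with $\nodelab{u}=\recID{l}{m}{t}{b}$, Step~1 selects exactly the valid extensions $c$ of any prefix represented by $u$: by the characterization recalled in Section~\ref{sec:previous-algo}, these are the characters $c$ such that some $(i,j)\in\Ext_{l,m}$ with $i\le t$ and $j\le b$ matches $c$ in both $X$ and $Y$. By Lemma~\ref{lemma:equal-extensions}, the quadruple $(l_c,m_c,t_c,b_c)$ computed in Step~2 depends only on $\nodelab{u}$ and $c$, and is thus well-defined regardless of which particular prefix represented by $u$ one considers; Step~2 computes it via the $O(\sigma)$ swing-update procedure recalled at the end of Section~\ref{sec:previous-algo}. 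The conditional in Step~3 then realizes exactly the identification step of Definition~\ref{definition-MCSDAG}: a second node with an already-seen label is never created, which is precisely the merge of all equivalent nodes into one.

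The main obstacle is justifying the non-recursive branch Step~3(a), where the new edge is simply attached to an already-existing node $v$ of label $\recID{l_c}{m_c}{t_c}{b_c}$ and no recursion is spawned. Here I would invoke Corollary~\ref{corollary:same-neighbors}: any two nodes of the underlying DAG $A$ sharing the same label have identical labelled out-neighborhoods, and by applying this observation recursively their reachable subDAGs are isomorphic. Hence, attaching the edge $(u,v)$ rather than regenerating the subDAG below $v$ preserves the full set of labelled $st$-paths, and therefore the set of MCSs spelled out. Termination is guaranteed because the number of distinct quadruples $\recID{l}{m}{t}{b}$ is $O(n^4)$, and recursion is triggered at most once per distinct label. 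Combining invariants~(i)--(ii) at the outermost call yields $H=\MCSDAG(X,Y)$, as required.
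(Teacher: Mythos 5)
Your proof is correct and follows essentially the same route as the paper's: an inductive invariant over the DFS recursion showing that $H$ always coincides with (a portion of) DAG $A$ after nodes with equal labels have been identified, using the valid-extension characterization for out-neighbors and Lemma~\ref{lemma:equal-extensions}/Corollary~\ref{corollary:same-neighbors} to justify the merge in Step~3(a). Your version is somewhat more explicit about why the non-recursive branch is safe (the reachable subDAG below an already-processed node is complete) and about termination, but the underlying argument is the same.
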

\begin{proof}
We show that, at every step, $H$ is a subgraph of DAG $A$ where nodes with equal $\nodelab{}$ values have been identified. This is true at the beginning, when $H =\{s\}$. If this holds at the beginning of a recursive call for a node $u$, then it must hold at the end. Indeed, we use valid extensions to identify neighbors of a given node, which is also the definition of neighbors for a node of $A$. For each valid extension $c$, we compute the label $\recID{l_c}{m_c}{t_c}{b_c}$ of the corresponding node. Now, if there already exists a node $v$ with such label, we add an edge between $u$ and $v$. This correctly identifies the two nodes with equal labels as being the same node, as per operation~\ref{item:MCSDAG-identify}. Otherwise, the new node must be  added, with the correct label $\recID{l_c}{m_c}{t_c}{b_c}$ as per operation~\ref{item:MCSDAG-start}, since it represents a new prefix. 
\end{proof}

Once we have built $\MCSDAG$, we can easily compute \textsf{compact} $\MCSDAG$ in linear time in the size of $\MCSDAG$, in the following way. Let us proceed in topological order of the nodes; when a node $v$ with $N^+(v) = \{w\}$ (i.e. out-degree 1) is encountered, we perform the following:
\begin{enumerate}
    \item For each $u \in N^-(v)$, remove edge $(u,v)$ and add edge $(u, w)$ with label $\edgelab{u,v}\edgelab{v,w}$.
    \item After all in-neighbors have been processed, remove node $v$ and edge $(v,w)$ from $\MCSDAG$. 
\end{enumerate}
% Let \MCSDAGcompr be the DAG obtained after this procedure. 
Clearly, the minimum out-degree of  is 2,
% such graph has no nodes of out-degree 1.
and nodes $s$ and $t$ are never removed. 
% It is immediate to see that labeled $st$-paths in \MCSDAG are the same as labeled $st$-paths in \MCSDAGcompr.

\medskip 
\noindent\textbf{Complexity.} \quad Let us now study the time and space complexity of the procedure. As mentioned before, finding $\Ext_{l,m}$ requires $O(\sigma \log n)$ time (see \cite{conte2022enumeration} for details). Checking whether each $c$ corresponding to an element of $\Ext_{l,m}$ satisfies the swings' condition requires constant time per character. 
For each such $c$, computing positions $(l_c,m_c)$ can be done in constant time, while updating the swings accordingly can be performed in $O(\sigma)$ time, provided appropriate data structures. Indeed, to ensure this we only need to ensure constant-time queries for the next occurrence of character $c$ after a given position $i$. To this end, let us keep two bit-vectors for each $c \in \Sigma$, one for $X$ and one for $Y$, indicating the positions in which $c$ occurs in the strings. By equipping these vectors with rank and select data structures, which employ $O(n\sigma)$ space, we can find in constant time the next occurrence of any character after a given position~\cite{rankselect}.
All operations described so far are performed exactly once per node. Therefore, the total time required for these is $O(|V|\sigma \log n)$.

Let us now consider Step 3. We need to check if a node $v$ belongs to the current DAG, and add it if it does not. To be able to efficiently perform these operations, let us keep and dynamically update a bit matrix for pairs $l$, $m$, where 1 occurs if that pair currently corresponds to at least one node. Then, each cell filled with a one has an associated balanced binary search tree, which indexes the pairs of swings $(t,b)$ such that there currently is a node $u \in W$ with $\nodelab{u} =\recID{l}{m}{t}{b}$. These pairs are ordered according to the total order: $(t,b) < (t',b')$ if and only if $t < t'$ or $t=t'$ and $b < b'$.
% Recall that once the pair $l,m$ is fixed, swings are totally ordered. 
Lookup and insertions in such data structure require $O(\log n)$ each, and the total space employed is $O(|V|)$. Now, note that we perform a membership check, with subsequent possible insertion, exactly once per edge of the DAG. Recalling that $|E|\le \sigma |V|$, the total time required by these operations is again $O(|E|\log n) = O(|V| \sigma \log n)$. %The total space is bounded by the MCS DAG size.

To be able to give final complexity bounds, we therefore need bounds on the size of the DAG we constructed. Trivially, we can bound $|V| = O(n^4)$, since no two nodes share the same $ID$, and the number of different $ID$s is bounded by $n^4$. Therefore, we surely have a polynomial-time and space algorithm for building a DAG. 
We can actually do better than this: thanks to some properties of the swings, in the next section we show that $|V|=O(n^3)$, leading to the complexity bounds given in Theorem~\ref{thm:main-CAT}.

\subsection{Cubic Size of the MCS DAG}
\label{section:mcsdag-size}
To conclude the proof of Theorem~\ref{thm:main-CAT}, we need to study the size of $\MCSDAG(X,Y)$ as constructed in Section~\ref{section:mcsdag-construction}.
% We have a trivial $O(n^4)$ bound on the number of nodes of the MCS DAG: indeed, the number of possible recursion IDs is bounded by $O(n^4)$, and every node of \MCSDAG has a distinct label. 
%
We prove a monotonicity property of the swing values, which will allow us to show that the number of nodes of $\MCSDAG(X,Y)$ is bounded by $O(n^3)$.

In Section~\ref{sec:previous-algo}, we saw that the top swing of a prefix $P$ is defined as $\ltimes_T(P) = \min \{ i> l \ | \ P \not \in \MCS(X_{\le i},Y_{\le m})\}$, where $X_{\le l}$ and $Y_{\le m}$ are the shortest prefixes of respectively $X$ and $Y$ containing $P$. In other words, if we start from strings $X_{\le l}, Y_{\le m}$ (where $P$ is obviously maximal), it is the minimum extension of string $X$ that ensures at least one insertion in $P$. 
Symmetrical definition holds for bottom swings, by switching the two strings. 

Note that, if $\ltimes_T(P) = t$, then $X[t] = Y[m] = X[l]$: the swings' positions are  occurrences of the last character of the prefix. Indeed, the swing is the first occurrence of character $Y[m] = X[l]$, after certain positions in string $X$ (see the incremental computation of swings in Section~\ref{sec:previous-algo}). We also note the following, which follows from the definition of swings:
\begin{remark}
\label{remark:remapping}
Let $P = p_1 \cdots p_N$ a valid prefix with swings $\langle t,b \rangle $. Let $X_{\le l_i}$ and $Y_{\le m_i}$ be the shortest prefixes respectively of $X$ and $Y$ that contain $p_1 \cdots p_i$. 
Then, there is at least one match between $Y[m_{N-1}, m_N)$ and $X(l_N, t)$. More specifically, there can either be a match between $Y(m_{N-1}, m_N)$ and $X(l_N, t)$, or between $Y[m_{N-1}]$ and $X(l,t)$ which will lead to an insertion in a previous part of the prefix.
% More specifically, if the insertion is between $y$ and $(l,m)$, then there is a match in $Y(y, m)$ and $X(l, t)$; otherwise there is a match between $y$ and $X(l,t)$ which will lead to an insertion in a previous part of the prefix. 
\end{remark}

%**********

We briefly also recall how to incrementally compute the swings of a prefix, first described in~\cite{conte2022enumeration}, as it will be useful in our proofs. We only describe the procedure for the top swing as the bottom swing is symmetrical:
\begin{itemize}
\item If $P$ is composed of a single character $c$, with first occurrence in $X$ at position $l$ and in $Y$ at position $m$, then it suffices to compute, for every character $d$ in $Y_{<m}$, the first occurrence of $d$ in $X_{>l}$, and take the minimum of these. The swing then corresponds to the first occurrence of $c$ after such minimum.

\item Let $P= p_1 \cdots p_N$ be a valid prefix with $N>1$, and let $l_1,...,l_N$ (resp. $m_1,...,m_N$) be the positions of $X$ (resp. $Y$) such that $X_{\le l_i}$ (resp. $Y_{\le m_i}$) is the shortest prefix containing $p_1\cdots p_i$. The \emph{personal swing} $\ltimes_T{(l_N, m_N)}$ of the last position is the swing of character $p_N$ when seen as a prefix over the strings $X_{>l_{N-1}}, Y_{>m_{N-1}}$, instead of over the whole strings (and thus computed as above). In other words, the personal swing of a character expresses the change necessary to have an insertion between itself and the previous character of the prefix. The top swing of $P$ is  the minimum between the personal swing of $(l_N,m_N)$, and the first occurrence of $Y[m_N]=p_N$ after the top swing of prefix $p_1 \cdots p_{N-1}$. This second swing indicates the change required for an insertion in the previous part of the prefix. 
\end{itemize}

%**********

% \paragraph*{Novel Swing Properties}
We now present some new swing properties. This lemma proves that, when two prefixes are extended with a valid extension that occurs at the same pair of positions, then the relative order of the swings remains unchanged during the extension:
\begin{lemma}
\label{lemma:monotone-swings}
Let $u$ and $u'$ be two nodes of \MCSDAG, with $\nodelab{u} = \recID{x}{y}{\tswing{P}}{\bswing{P}}$ and $\nodelab{u'} = \recID{x'}{y'}{\tswing{P'}}{\bswing{P'}}$ such that $x \neq x'$ or $y \neq y'$, where $P$ (resp. $P'$) is any prefix associated to $u$ (resp. $u'$). Assume that we have $v \in N^+(u)$ with $\nodelab{v} = \recID{l}{m}{\tswing{P\, c}}{\bswing{P\, c}}$, and $v' \in N^+(u')$ with $\nodelab{v'} = \recID{l}{m}{\tswing{P'\, c}}{\bswing{P'\, c}}$ (i.e.~same positions $l,m$ corresponding to character $c$). Then, the swings change monotonically:
\begin{itemize}
    \item $\tswing{P} < \tswing{P'} \Rightarrow \tswing{P \, c} \le \tswing{P' \, c}$
    \item $\bswing{P} < \bswing{P'} \Rightarrow \bswing{P \, c} \le \bswing{P' \, c}$
\end{itemize}
% Let $L$ and $L'$ be the leftmost mappings of two valid prefixes ending with different edges. If both mappings can be extended with the same edge $(l,m)$, then the swings change monotonically, that is:
%     \[\tswing{L} < \tswing{L'} \Rightarrow \tswing{L, (l,m)} \le \tswing{L', (l,m)}\]
%     \[\bswing{L} < \bswing{L'} \Rightarrow \bswing{L, (l,m)} \le \bswing{L', (l,m)}\]
\end{lemma}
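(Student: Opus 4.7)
The plan is to prove the top-swing statement; the bottom-swing statement follows by the symmetric argument exchanging $X$ and $Y$. Using the incremental swing recurrence from Section~\ref{sec:previous-algo}, write $\tswing{P\,c} = \min(A,B)$ and $\tswing{P'\,c} = \min(A',B')$, where $B := \nextc_X(c, \tswing{P})$, $B' := \nextc_X(c, \tswing{P'})$ are the inherited contributions, and $A := \nextc_X\bigl(c,\, \min_{d\in Y(y,m)}\nextc_X(d,l)\bigr)$ and $A'$ (defined analogously, with $y$ replaced by $y'$) are the personal swings at $(l,m)$. Unrolling the single-character swing formula over the shifted strings $(X_{>x}, Y_{>y})$ shows that $A$ depends on $\nodelab{u}$ only through $y$ (given the fixed $l,m,c$), not through $x$, and is non-decreasing in $y$ since enlarging $y$ shrinks the minimization range $Y(y,m)$. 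By monotonicity of $\nextc_X(c,\cdot)$, the hypothesis $\tswing{P}<\tswing{P'}$ yields $B\le B'$.

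If $y\le y'$ then $A\le A'$ and $B\le B'$ simultaneously, so $\min(A,B)\le\min(A',B')$ and the claim follows; this subsumes the case $y=y'$ (where the two quadruples differ only in $x$ or in the swings). In the opposite regime $y>y'$, the personal-swing direction reverses ($A\ge A'$), and I would split on which term realizes $\min(A',B')$. If $\min(A',B')=B'$, then $\tswing{P\,c}\le B\le B'=\tswing{P'\,c}$. Otherwise $\tswing{P'\,c}=A'<B'$; pick $d^*\in Y(y',m)$ attaining the inner minimum defining $A'$. If $d^*$ occurs at some position in $Y(y,m)$, then $d^*$ also enters $A$'s inner minimum, forcing $A\le A'$ and hence $A=A'$, from which $\tswing{P\,c}\le\tswing{P'\,c}$ as before.

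The main obstacle is the remaining subcase, where every occurrence of $d^*$ in $Y(y',m)$ lies in the slice $Y(y',y]$. Here the plan is to prove the structural bound $\tswing{P}\le\nextc_X(d^*,l)$, which immediately gives $B\le\nextc_X(c,\nextc_X(d^*,l))=A'$ and closes the case via $\tswing{P\,c}\le B\le A'=\tswing{P'\,c}$. To establish the bound, I would exhibit an explicit insertion witnessing that $P$ is non-maximal in $(X_{\le\nextc_X(d^*,l)},\,Y_{\le y})$: the character $d^*$ at some position $y^*\in(y',y]$ of $Y$ pairs with its new occurrence at $\nextc_X(d^*,l)$ in $X$, and Remark~\ref{remark:remapping} applied to $P$ allows re-embedding the tail so that the enlarged string still fits. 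The technical subtlety is to handle the cases $y^*<y$ (inserting $d^*$ between $p_{N-1}$ and $p_N$ in the embedding of $P$) and $y^*=y$ (where $d^*=p_N$ forces pushing $p_N$ itself further right in $X$, again via the structure given by Remark~\ref{remark:remapping}).
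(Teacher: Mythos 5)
Your overall strategy mirrors the paper's: both start from the incremental recurrence $\tswing{P\,c}=\min\{\ltimes_T^{(l,m)},\nextc_X(c,\tswing{P})\}$ and use monotonicity of $\nextc_X(c,\cdot)$ to compare the inherited terms $B,B'$. The paper then finishes in one line by asserting that the personal-swing term $\ltimes_T^{(l,m)}$ is \emph{the same} for $P\,c$ and $P'\,c$ because the new positions $(l,m)$ coincide. You instead note that, under the paper's own definition, the personal swing is computed over $X_{>x},Y_{>y}$ versus $X_{>x'},Y_{>y'}$, so for the top swing it depends on $y$ and $A,A'$ need not coincide when $y\neq y'$. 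That observation is legitimate, and your treatment of the case $y\le y'$ and of the first two subcases of $y>y'$ (namely $\min(A',B')=B'$, and $d^*$ reoccurring in $Y(y,m)$) is correct.

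The difficulty is the final subcase ($y>y'$, $\tswing{P'\,c}=A'<B'$, and $d^*$ occurring in $Y(y',m)$ only within $Y(y',y]$): there you state a plan, not a proof. Everything hinges on the unestablished bound $\tswing{P}\le\nextc_X(d^*,l)$, i.e., that $P$ is already non-maximal in $(X_{\le \nextc_X(d^*,l)},Y_{\le y})$, and the insertion argument you sketch does not obviously go through. Appending $d^*$ fails because $P$ cannot embed in $Y_{<y}$ (minimality of $y$); inserting $d^*$ between $p_{N-1}$ and $p_N$ requires its occurrence $y^*$ to lie strictly between the canonical position of $p_{N-1}$ in $Y$ and an occurrence of $p_N$ at position at most $y$, and you only know $y^*\in(y',y]$, with no control over where $p_{N-1}$ sits relative to $y^*$; if the insertion must happen deeper inside $P$, nothing in the data $(x,y,l,m,d^*)$ guarantees it fits within $X_{\le\nextc_X(d^*,l)}$. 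Remark~\ref{remark:remapping} only asserts the existence of a match \emph{below} a given swing (a necessary condition), not the converse non-maximality witness you need here, so invoking it does not close the case. Until this subcase is proved (or shown vacuous), the argument is incomplete.
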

\begin{proof}
We prove the result for top swings. Let $\tswing{P} = t$ and $\tswing{P'} = t'$, with $t < t'$. We note that for $(l,m)$ to be a valid extension for both prefixes, we must have $l \le t < t'$ (Swing condition~\ref{condition:swings} for valid extensions' characterization in Section~\ref{sec:previous-algo}).
By the incremental computation of swings, the swings of $P \, c$ and $P' \, c$ are computed by taking the minimum between the personal swing $\ltimes_T^{(l,m)}$ of the new positions, and the next occurrence of the corresponding character after the top swings $t, t'$. 
More specifically, 
% let $\nextc_X(l, i)$ denote the next occurrence of character $X[l]$ after position $i$ in string $X$; then 
$\tswing{P \, c} = \min \{ \ltimes_T^{(l,m)}, \nextc_X(c,t)\}$ and $\tswing{P' \, c} = \min \{ \ltimes_T^{(l,m)}, \nextc_X(c,t')\}$. Since the first component of the minimum is the same, it suffices to prove that $\nextc_X(c,t) \le \nextc_X(c,t')$ to conclude $\tswing{P \, c} \le \tswing{P'\, c}$. Indeed, since $t < t'$ are positions in the same string, the next occurrence of a given character after $t$ cannot be strictly bigger than the next occurrence of the same character after $t'$. Thus, we have proved the claim for top swings; bottom swings are symmetrical. 
\end{proof}

The next corollary shows that the opposite implication holds for strict inequalities:
\begin{corollary}
\label{corollary:monotone-swings}
    Under the same hypotheses of Lemma~\ref{lemma:monotone-swings}, we have 
    \begin{itemize}
        \item $\tswing{P\, c} < \tswing{P' \, c} \Rightarrow  \tswing{P} < \tswing{P'}$
        \item $\bswing{P\, c} < \bswing{P'\, c)} \Rightarrow  \bswing{P} < \bswing{P'}$
    \end{itemize}
\end{corollary}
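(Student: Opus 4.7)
The plan is to prove the corollary as the strict companion of Lemma~\ref{lemma:monotone-swings} by a case analysis on the ordering of $\tswing{P}$ and $\tswing{P'}$. I would only write out the top-swing implication in detail, since the bottom-swing implication follows by the perfect symmetry between $X$ and $Y$ in the definitions of $\ltimes_T$ and $\ltimes_B$.

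The approach is argument by contradiction. Assume $\tswing{P\, c} < \tswing{P' \, c}$ and, for the sake of contradiction, $\tswing{P} \ge \tswing{P'}$. Two sub-cases have to be ruled out. In the strict sub-case $\tswing{P} > \tswing{P'}$, I would simply invoke Lemma~\ref{lemma:monotone-swings} with the roles of $P$ and $P'$ interchanged: the hypotheses of the lemma are invariant under this swap, since both extensions land at the same pair of positions $(l,m)$ while the starting quadruples differ in the first two coordinates ($(x,y) \neq (x',y')$). The lemma then yields $\tswing{P'\, c} \le \tswing{P\, c}$, contradicting the initial strict inequality.

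The equality sub-case $\tswing{P} = \tswing{P'} =: t$ is where the real content lies, and I expect it to be the main obstacle, because Lemma~\ref{lemma:monotone-swings} is stated only for strict inequalities and therefore cannot be applied as a black box. To handle it, I would reopen the incremental formula already used inside the proof of that lemma, namely
\[
\tswing{P\, c} = \min\{\ltimes_T^{(l,m)}, \nextc_X(c,t)\}, \qquad \tswing{P'\, c} = \min\{\ltimes_T^{(l,m)}, \nextc_X(c,t')\}.
\]
Here the first entry of the minimum is the ``personal'' swing of the freshly inserted position $(l,m)$, which depends on $(l,m)$ only and is thus common to both computations, exactly as exploited inside the proof of Lemma~\ref{lemma:monotone-swings}. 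With $t = t'$, both right-hand sides are identical, so $\tswing{P\, c} = \tswing{P'\, c}$, which once more contradicts $\tswing{P\, c} < \tswing{P' \, c}$.

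Having excluded both $\tswing{P} > \tswing{P'}$ and $\tswing{P} = \tswing{P'}$, the only possibility left is $\tswing{P} < \tswing{P'}$, which is what the statement requires. The second implication on the bottom swings is obtained by the same argument after exchanging the roles of $X$ and $Y$ (and of $\ltimes_T$ with $\ltimes_B$). The only delicate ingredient in the whole argument is precisely the equality case; everything else reduces to a clean contrapositive of Lemma~\ref{lemma:monotone-swings}.
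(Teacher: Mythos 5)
Your proof is correct and rests on the same key ingredients as the paper's: the decomposition $\tswing{P\,c} = \min\{\ltimes_T^{(l,m)}, \nextc_X(c,\tswing{P})\}$ with a shared personal-swing term, and the monotonicity of $\nextc_X(c,\cdot)$. The paper argues directly (deducing $\nextc_X(c,t) < \nextc_X(c,t')$ and hence $t<t'$) rather than by contradiction with your case split on $t$ versus $t'$, but this is only an organizational difference, not a different route.
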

\begin{proof}
We reverse the proof of Lemma~\ref{lemma:monotone-swings}. Consider top swings, and recall 
$\tswing{P\, c} = \min \{ \ltimes_T^{(l,m)}, \nextc_X(c,t)\}$ and $\tswing{P'\, c} = \min \{ \ltimes_T^{(l,m)}, \nextc_X(c,t')\}$, where $t= \tswing{P}$ and $t' = \tswing{P'}$. Since the first part of the minimum is the same, and $\tswing{P\, c} < \tswing{P'\, c}$, we have two options
\begin{enumerate}
    \item $\nextc_X(c,t) \le \ltimes_T^{(l,m)} \le \nextc_X(c,t')$, where at most one inequality can be an equality; or
    \item $\nextc_X(c,t) < \nextc_X(c,t') \le \ltimes_T^{(l,m)}$.
\end{enumerate}
In any case, we have $\nextc_X(c,t) < \nextc_X(c,t')$. Since $t $ and $t'$ are positions in the same string, this relationship between the next occurrence of the same character immediately also implies $t < t'$, which concludes the proof. 
\end{proof}

We are now ready to prove our main result:
\begin{theorem}
\label{thm:inverted-swings}
For any two nodes $u\neq u'$ of \MCSDAG, let $\nodelab{u} = \recID{l}{m}{t}{b}$ and $\nodelab{u'} = \recID{l}{m}{t'}{b'}$. Then swing pairs for the same $l,m$ do not dominate each other; namely, if $t > t'$, then $b \le b'$. 
\end{theorem}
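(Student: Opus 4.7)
The plan is to argue by contradiction using strong induction on $|P| + |P'|$, where $P, P'$ are prefixes corresponding to $u$ and $u'$. Both end with $c = X[l] = Y[m]$, so peeling the last character yields prefixes $Q, Q'$ ending at $(l_Q, m_Q), (l_{Q'}, m_{Q'})$ with swings $(t_Q, b_Q), (t_{Q'}, b_{Q'})$. The base case (length-$1$ prefixes) is immediate: only one single-character prefix terminates at any given $(l, m)$, namely $c$ itself, so two distinct nodes with that length cannot coexist at the same $(l, m)$.

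In the inductive step I split on whether the predecessors coincide. In \textbf{Case~A} where $(l_Q, m_Q) = (l_{Q'}, m_{Q'})$, the inductive hypothesis applied to $(Q, Q')$ gives that either their swings are equal---forcing $Q = Q'$ as nodes of $\MCSDAG$ and, by deterministic extension via $c$, $u = u'$, contradicting the assumption---or the two swings are incomparable, say $t_Q > t_{Q'}$ and $b_Q < b_{Q'}$. In the latter situation, since both extensions share their predecessor position, the personal swings $\ltimes_T^{(l,m)}$ and $\ltimes_B^{(l,m)}$ coincide for the two transitions, so the incremental updates $t = \min\{\ltimes_T^{(l,m)}, \nextc_X(c, t_Q)\}$ and $b = \min\{\ltimes_B^{(l,m)}, \nextc_Y(c, b_Q)\}$ (and the analogous ones for $t', b'$) inherit the incomparability, producing $b \le b'$ and contradicting $b > b'$.

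In \textbf{Case~B} where $(l_Q, m_Q) \ne (l_{Q'}, m_{Q'})$, Corollary~\ref{corollary:monotone-swings} transports the strict dominance back one step: $t_Q > t_{Q'}$ and $b_Q > b_{Q'}$. This is the main obstacle, because the inductive hypothesis only constrains prefix pairs sharing an endpoint and so does not directly apply to $(Q, Q')$. The plan to close it is to iterate the reduction: whenever the current pair still ends with a common character, I peel it and reapply Corollary~\ref{corollary:monotone-swings}, pushing the strict dominance one level deeper until one either falls into Case~A (where the previous argument closes it) or reaches very short prefixes whose swings are fully determined by the positions. When the peeled prefixes end with \emph{different} characters, the fallback uses Remark~\ref{remark:remapping} together with the shared-extension constraints $l = \nextc_X(c, l_Q) = \nextc_X(c, l_{Q'})$ and $m = \nextc_Y(c, m_Q) = \nextc_Y(c, m_{Q'})$: the match witnesses certifying $t_{Q'}$ and $b_{Q'}$ must lie in disjoint ``quadrants'' of $X \times Y$ around $(l_{Q'}, m_{Q'})$ (one in $X(l_{Q'}, t_{Q'}) \times Y[\,\cdot\,,m_{Q'})$, the other in $X[\,\cdot\,,l_{Q'}) \times Y(m_{Q'}, b_{Q'})$), and combining them with the strict dominance $t_Q > t_{Q'}, b_Q > b_{Q'}$ should yield a strict extension of $P'$ inside $\MCS(X_{\le l}, Y_{\le m})$, contradicting $P' \in \MCS(X_{\le l}, Y_{\le m})$. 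I expect this final structural combinatorial step---turning the dominance of swings plus the geometric positions of the two match witnesses into an actual insertion inside the original $X_{\le l}, Y_{\le m}$ window---to be the hardest part of the proof.
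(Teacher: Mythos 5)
Your Case~A is sound and essentially mirrors the paper's handling of coincident predecessor positions (the paper walks backwards to the first index where the positions diverge and then propagates the bottom-swing inequality forward with Lemma~\ref{lemma:monotone-swings}; your induction does the same one step at a time). The genuine gap is Case~B, which is not a corner case but the combinatorial core of the theorem, and your proposal does not close it. Two specific problems. First, the plan to ``iterate the reduction'' past the point where $(l_Q,m_Q)\neq(l_{Q'},m_{Q'})$ is unsound: Lemma~\ref{lemma:monotone-swings} and Corollary~\ref{corollary:monotone-swings} require the two prefixes to be extended by a character occurring at the \emph{same} pair of positions, and once the paths have diverged positionally the next peeled characters generally sit at different positions in $X$ and $Y$, so the corollary no longer transports anything; nor is there any reason the positions should re-converge into Case~A. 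Second, the fallback argument (``disjoint quadrants \ldots\ should yield a strict extension of $P'$'') is exactly the step you admit you have not done, and it aims at the wrong contradiction: two distinct nodes at \emph{different} positions with dominating swing pairs are not forbidden by anything in the paper, so deriving $t_Q>t_{Q'}$ and $b_Q>b_{Q'}$ for the predecessors is not by itself contradictory, and non-maximality of $P'$ in the window $X_{\le l},Y_{\le m}$ is not what fails.

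What actually closes Case~B in the paper is a direct argument on the positions $(x,y)$, $(x',y')$ of the predecessors and on the swings $t,t',b,b'$ of $u,u'$ themselves, with no induction and no appeal to the predecessors' swings: (i) since both predecessors admit the valid extension at $(l,m)$, their positions must be incomparable (if $x\le x'$ and $y\le y'$, the match at $(x',y')$ would give an insertion invalidating the extension from $v$); (ii) by the minimality characterization of the top swing in Remark~\ref{remark:remapping}, $t$ is the least value admitting a match between $X[l+1,t)$ and $Y[y,m-1]$, so $t>t'$ forces $y>y'$ (otherwise the match certifying $t'$ inside $Y[y',m-1]\subseteq Y[y,m-1]$ would contradict minimality of $t$); (iii) incomparability then yields $x\le x'$; (iv) the symmetric minimality characterization of the bottom swing gives $b\le b'$, because $X[x',l-1]\subseteq X[x,l-1]$ lets the match certifying $b'$ also certify a value $\le b'$ for $b$. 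You would need to supply an argument of this kind (or an equivalent one) before your induction has a base to stand on; as written, every instance of the theorem whose backward walk ever reaches a positional divergence is unproven.
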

\begin{proof}
Consider any %in-neighbors 
$v \in N^-(u)$, and $v' \in N^-(u')$. Let $\nodelab{v} = \recID{x}{y}{t_v}{b_v}$ and $\nodelab{v'} = \recID{x'}{y'}{t_{v'}}{b_{v'}}$. 
% Consider any $su$-path $Q = s v_1 \cdots v_k u$ and any $su'$-path $Q' = s v'_1 \cdots v'_h u'$ in $\MCSDAG(X,Y)$. Let $\nodelab{v_i} = \recID{l_i}{m_i}{t_i}{b_i}$ and $\nodelab{v'_j} = \recID{l'_j}{m'_j}{t'_j}{b'_j}$ for all $i,j$. Consider $x$ and $y$ to be the maximum values of $i$ and $j$ such that $l'_i \neq l'_j$ or $m'_i \neq m'_j$. Note that we have equalities for nodes $u$ and $u'$ (they both have positions $l,m$), so these indices are well defined, and at most equal to $k$ and $ h$, respectively. Such indices must be at least $1$, since $u$ and $u'$ are different nodes, and thus correspond to different prefixes.
Let us first assume that $x \neq x'$ or $y \neq y'$, i.e. they are not the same pair of positions. If we look at positions $(x, y)$ and $(x',y')$ we must have either $x\le x'$ and $y >y'$, or $x > x'$ and $y \le y'$. Indeed, assume by contradiction that $x\le x'$ and $y \le y'$. Since both of these nodes have a valid extension corresponding to positions $(l,m)$, we must also have $x\le x' < l < t_v,t_{v'}$ and $y \le y' < m < b_v,b_{v'}$. Then, $(l,m)$ would not be a valid extension for $v$: the corresponding prefix is not maximal until the positions given by the swings, since we have an insertion corresponding to the character occurring at positions $(x',y')$.

We now show that $t > t'$ implies $y > y'$. By Remark~\ref{remark:remapping}, $t$ is the smallest value such that a match occurs between $X[l+1,t)$ and $Y[y, m-1]$. That is,  there is no $\tau < t$ such that $X[l+1,\tau )$ and $Y[y, m-1]$ have a match. 
Let $t>t'$, and assume by contradiction that $y \le y'$. Then, $Y[y', m-1]\subseteq Y[y, m-1]$. By definition of $t'$, there is a match between $X[l+1,t')$ and $Y[y', m-1]\subseteq Y[y, m-1]$. This is a contradiction on the minimality of $t$: there is a smaller $\tau = t' < t$ which yields a match. 
Now, since we cannot have both $y' \le y$ and $x'\le x$, we must have $x\le x'$. By a symmetrical reasoning, we show that the bottom swings must satisfy $b ' \le b$. Indeed, recall that $b$ is the minimum value for which a match occurs between $X[x, l-1]$ and $Y[m+1,b]$, and assume by contradiction that $b > b'$. Since we have $X[x',l-1] \subseteq X[x, l-1]$, we have a match between $X[x, l-1]$ and $Y[m+1,b']$ for a smaller value $b' < b$: contradiction. 
% Now, since the two edges $(x,y)$ and $(x', y')$ cross, we must also have $x\le x'$. By a symmetrical reasoning, we show that the bottom swings must satisfy $b ' \le b$. Indeed, recall that $b$ is the minimum value for which a match occurs between $X[x, l-1]$ and $Y[m+1,b]$, and assume by contradiction that $b > b'$. Since we have $X[x',l-1] \subseteq X[x, l-1]$, we have a match between $X[x, l-1]$ and $Y[m+1,b']$ for a smaller value $b' < b$: contradiction. 

Let us now consider the case where the in-neighbors $v$ and $v'$ have the same pair of positions in their $ID$s: $x=x'$ and $y= y'$. Let us inductively consider $w_{i+1} \in N^-(w_i)$ and $w_{i+1}' \in N^-(w_i')$, where $w_0 = v$ and $w'_0 = v'$. We stop at the first $j$ such that $\nodelab{w_j} = \recID{x_j}{y_j}{t_j}{b_j}$ and $\nodelab{w'} = \recID{x'_j}{y'_j}{t'_{j}}{b'_{j}}$ with $x_j \neq x'_j$ or $y_j \neq y'_j$. Such pair satisfies the conditions of the first part of the proof, since $x_{j+1} = x'_{j+1}$ and $y_{j+1} = y'_{j+1}$ by hypothesis. Nodes $w_j$ and $w_j'$ are obtained by going backwards in the DAG for $j$ steps, starting respectively from nodes $u$ and $u'$. Since $j$ is the first index such that the corresponding positions for extensions differ, we have $\edgelab{w_{k+1}, w_{k}} = \edgelab{w'_{k+1}, w'_{k}}$ for all $k = 1,...,j-1$. By iterating Corollary~\ref{corollary:monotone-swings}, we thus have that $t > t'$ implies $t_j > t_j'$. By the first part of the proof, we therefore have $b_j \le b_j'$. By Lemma~\ref{lemma:monotone-swings}, this propagates to the end of the path in the MCS DAG, to also yield $b \le b'$.
% Let us now consider the case where $i< N-1$ or $j< N'-1$ (or both). Let $(h,k)$ be the next edge in both mappings (it is the same by hypothesis). Let $L_R = l_1,...,l_i = (x,y), (h,k)$ and $L_R' = l_1',...,l_j' = (x',y'), (h,k)$ be the mappings up to edge $(h,k)$. Let $(t_R, b_R)$ and $(t_R', b_R')$ be the swings for edge $(h,k)$ with respect to mappings $L_R$ and $L_R'$ respectively. Since the mappings are extended with the same edges from here onwards, and since the final swings satisfy $t > t'$, by Corollary~\ref{corollary:monotone-swings} we have $t_R > t_R'$.
% By the first part of the proof, we therefore have $b_R \le b_R'$. By Lemma~\ref{lemma:monotone-swings}, this propagates to the end of the mapping to also yield $b \le b'$.
\end{proof}

% When we have several swings for the same last edge $(l,m)$, they must satisfy the lemma pairwise, and they must thus have a shape like in the FIGURE.

From Theorem~\ref{thm:inverted-swings}, we can derive the following result, which proves that the number of swings for a fixed pair of positions $(l,m)$ is linear: 
%\todo[inline]{vvvvvv Non siamo ancora certi che il corollario sia vero vvvvvv}
\begin{corollary}
\label{cor:linear-swings}
    For any given choice of $l,m$, there are just 
    % $O(n)$ possible choices of $t,b$ such that $\recID{l}{m}{t}{b}$ corresponds to a recursive call. 
    $O(n)$ nodes of $\MCSDAG(X,Y)$ having the form $\nodelab{} = \recID{l}{m}{\cdot}{\cdot}$.
\end{corollary}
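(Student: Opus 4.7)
The plan is to derive the corollary as a direct counting consequence of Theorem~\ref{thm:inverted-swings}. Fixing the pair $(l,m)$, every node of $\MCSDAG(X,Y)$ with label of the form $\recID{l}{m}{t}{b}$ is uniquely determined by its swing pair $(t,b)$, since nodes with identical labels have been merged by Definition~\ref{definition-MCSDAG}. Theorem~\ref{thm:inverted-swings} asserts that any two such swing pairs satisfy $t > t' \Rightarrow b \le b'$, so the set forms an antichain under strict coordinate-wise dominance in the integer grid $\{0,\ldots,n-1\} \times \{0,\ldots,n-1\}$.

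First, I would sort the swing pairs lexicographically, with $t$ as the primary key (ascending) and $b$ as the tie-breaker (descending). Through the resulting sequence, $t$ is weakly increasing by construction. For $b$, two cases arise between consecutive entries: if $t$ strictly increases, Theorem~\ref{thm:inverted-swings} forces $b$ to weakly decrease; if $t$ stays the same, the secondary sort makes $b$ strictly decrease. In both cases $b$ is weakly decreasing over the entire sorted sequence.

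Second, I would apply a transition-counting argument. Since all node labels are distinct, each consecutive pair in the sorted sequence differs in at least one coordinate, contributing either a strict $t$-increase or a strict $b$-decrease (possibly both). The number of strict $t$-increases is bounded by the number of distinct $t$-values, which is at most $n$; the same bound applies to strict $b$-decreases, since $b$ ranges over $\{0, \ldots, n-1\}$ as well. Hence the total number of transitions between consecutive pairs is at most $2(n-1)$, and the sequence has length at most $2n-1 = O(n)$.

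The conceptual heavy lifting for this corollary is entirely absorbed by Theorem~\ref{thm:inverted-swings}; the remaining argument is the standard staircase bound for 2D antichains under strict dominance. The only subtle point I foresee is ensuring that the tie-breaking convention on $b$ is consistent with the theorem's guarantee so that $b$ truly becomes monotone along the sorted sequence, which is immediate because the no-strict-dominance condition leaves the relative order of equal-$t$ pairs unconstrained and we are free to impose it.
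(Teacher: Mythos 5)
Your proposal is correct and follows essentially the same route as the paper: both reduce the claim to the antichain (Pareto-frontier) property guaranteed by Theorem~\ref{thm:inverted-swings} and then apply the standard staircase/transition-counting bound, namely that each step along the sorted sequence must strictly increase $t$ or strictly decrease $b$, giving at most $O(n)$ points. Your write-up is just a slightly more explicit version of the paper's folklore argument, with the sorting and tie-breaking spelled out.
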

\begin{proof}
Let us fix  $l,m$, and consider the set $S_{l,m}\subseteq \{0,...,n-1\}\times \{0,...,n-1\}$, where $(a,b) \in S_{l,m}$ if and only there exists $u$ such that $\nodelab{u} = \recID{l}{m}{a}{b}$. By
Theorem~\ref{thm:inverted-swings}, if two pairs $(a,b)$ and $(c,d)$ belong to $S_{l,m}$, then it cannot be $a \leq c$ and $c \leq d$ (or vice versa). So one pair cannot dominate the other.

We observe that the size of $|S_{l,m}|$ is the size of the classical Pareto frontier: for an arbitrary set of points in the $\{0,...,n-1\}\times \{0,...,n-1\}$ grid, the number of points in a Pareto frontier is less than $2n$. The observation is folklore:  each point in the frontier either increases the $x$-coordinate or decreases the $y$-coordinate (possibly both). Hence, there cannot be more points on the frontiers as the sum of the $n$ possible $x$-coordinates plus the $n$ possible $y$-coordinates. Hence,  $|S_{l,m}| = O(n)$.
\end{proof}
%\todo[inline]{$\wedge\wedge\wedge\wedge\wedge$ Non siamo ancora certi che il corollario sia vero $\wedge\wedge\wedge\wedge\wedge$}

Therefore, the number of nodes of the MCS DAG is cubic, as we have $O(n^2)$ choices for $l,m$, and every such choice gives at most a linear amount of swings $(t,b)$. 
Furthermore, it is immediate by construction of $\MCSDAG$ that the out-degree of every node is at most $\sigma$ (the characters that are valid extensions the given prefix). Therefore, we proved the space occupancy of $O(n^3 \sigma)$ memory words stated in Theorem~\ref{thm:main-CAT}.

% \begin{corollary}[MCS DAG size]
% \label{cor:MCSDAG-size}
%     The MCS DAG $\MCSDAG(X,Y) = (V,E, \nodelab{\cdot}, \edgelab{\cdot})$ has $|V| = O(n^3)$ nodes and $|E| = O(\sigma |V|) = O(\sigma n^3)$ edges, where $n = \max \{|X|, |Y|\}$.
% \end{corollary}

\section{Efficient Operations on \MCSDAG}
\label{sec:operations}

We describe here how to support some operations on \textsf{compact}  $\MCSDAG(X,Y)$, which has source $s$, target $t$, and no unary nodes. We assume that each node $u$ stores the number $p(u)$ of $ut$-paths. As \textsf{compact}  $\MCSDAG(X,Y)$ is a DAG of $O(n^3 \sigma)$ edges, we can computed $p(u)$ for each node $u$ in total $O(n^3 \sigma)$ time by running a DFS, as $p(u)$ is the sum of the $p(v)$'s for the out-neighbors $v$'s of $u$.

\subsection{CAT Enumeration of MCSs}
\label{sub:cat-stpaths}
% \item After a longer preprocessing requiring $O(n^3 \sigma \log n)$ time, and using $O(n^3)$ space instead of $O(n^2)$, we can then enumerate all the MCS by finding paths in the MCS DAG \MCSDAG.
The strings in $\MCS(X,Y)$ can be listed in lexicographic order by enumerating the (labeled) $st$-paths in \textsf{compact}  $\MCSDAG$, which can be done in Constant Amortized Time with a simple DFS algorithm where the out-neighbors of each node are visited in increasing order of the labels of their outgoing edges. Although 
%this is 
folklore, for completeness we sketch the DFS algorithm here. 
% Let us denote with \stpaths{u}{v}{\MCSDAGcompr} the sget of paths from node $u$ to node $v$ in the graph $\MCSDAGcompr$, where each path is given by the corresponding sequence of labels. 

% is a simple procedure, where we are able to output an MCS in time proportional to its length.

Consider a node $u$, where initially $u=s$, and denote the set of all $ut$-paths in $G=$ \textsf{compact}  $\MCSDAG(X,Y)$ by $\stpaths{u}{t}{G}$. The central idea is that any $ut$-path starts with $u$, followed by an element of $\stpaths{v}{t}{G\setminus u}$, i.e., a path in $G\setminus u$ (i.e.~$u$ and its incident edges removed) from an out-neighbor $v$ of $u$ to $t$. Since $G$ is a DAG, we can go a step further: a path from $u$ cannot reach $u$ again, therefore it is \textit{not} even necessary to remove $u$ (and its incident edges) from the DAG. We can thus represent $\stpaths{u}{t}{G}$ as the following disjoint union:
$
%\begin{equation}
%    \label{eq:binary-part}
    \stpaths{u}{t}{G} = \bigcup_{v \in N(u)} \{\edgelab{u,v} P  \mid P \in \stpaths{v}{t}{G}\}.
%\end{equation}
$
From this, it is immediate that enumeration can be performed by keeping a current path prefix which we expand by traversing the DAG, backtracking every time computation terminates for all children of a node.

% \begin{equation} \label{eq:binary-part-removal}
%     \stpaths{u}{t}{G} = \bigcup_{v \in N(u)} \{\edgelab{u,v} P  | P \in \stpaths{v}{t}{G\setminus u}\},
% \end{equation}
% The union is necessarily disjoint as different out-edges lead to different paths.
% are identified with their label sequences, and different out-edges must have distinct labels. 
%00

% % In other words, if we have determined the set $\stpaths{v}{t}{G}$ for all $v \in N(u)$, then we can also compute $\stpaths{u}{t}{G}$. 

To obtain constant amortized time from this, we only need three simple observations:
\begin{itemize}
    \item Every path leads to $t$, so each branch of the computation leads to a leaf of the recursion tree representing a solution.
    \item Each node except $t$ has at least $2$ out-neighbors due to the path compression in $G$, so the recursion tree generated has no unary nodes, meaning the number of internal recursion-nodes is not greater than the leaves (solutions).
    \item In each recursive node we spend just constant time per recursive child, so the total time is $O(N)$, where $N$ is the number of solutions.
\end{itemize}

\subsection{Searching, Selecting, and Ranking}

We observe that each node $u$ has at most $\sigma$ out-neighbors and the edges towards them are distinct (each must start with a different character of $\Sigma$): this constitutes a ``lexicographical partition'' of the paths from $u$ to $t$ since, after a common prefix, a path starting with a larger character will lead to a lexicographically larger string. 

Searching a string $P$ as a prefix of strings from $\MCS(X,Y)$  traverses \textsf{compact} $\MCSDAG$ starting from $s$ and matching the characters in $P$ along the labels for the edges in the path. Either the search fails before reaching the end of $P$, or it succeeds and leads to a node $u$. At this point, we can run the CAT enumeration (Section~\ref{sub:cat-stpaths}) starting from $u$ to list all the $ut$-paths and so all the extensions of $P$ to strings in $\MCS(X,Y)$.

Selecting the $i$th string in lexicographic order from $\MCS(X,Y)$ is similar but uses the information $p(v)$ in each traversed node $v$. It scans the outgoing edges in order of their labels, and checks the corresponding $p(v)$: whenever the edges scanned have the smallest partial sum $\geq i$, it follows the current edge and subtracts from $i$ this partial sum. It stops at node $t$. Let $S$ be the string thus found by concatenating the labels on the traced path from $s$ to $t$. The number of traversed nodes is at most $|S|+1$, and in each node we will consider up to $\sigma$ edges, for a total cost of $O(|S|\log \sigma)$, as this can be further optimized by storing, for each edge, the sum of the annotations of the previous edges, and using binary search on these $O(\sigma)$ sums. Ranking $S$ is like searching $S$ and using the partial sums as mentioned above to get a total sum of $i$ when $t$ is reached.

\begin{theorem}
\label{thm:main-CAT}
Given two strings $X$ and $Y$ of length $n$ on an alphabet of size $\sigma$, \textsf{compact} $\MCSDAG(X,Y)$ stores $\MCS(X,Y)$ in space $O(n^3\sigma)$ and can be built in $O(n^3\sigma\log n)$ time. It supports the following operations:
\begin{itemize}
    \item For a given string $P$, report all the strings with prefix $P$ from $\MCS(X,Y)$ in $O(|P| \log \sigma + \mathit{occ})$ time, where $\mathit{occ}$ is the number of reported strings.
    \item For any integer $1 \leq i \leq |\MCS(X,Y)|$,  select the $i$th string $S$ in lexicographic order from $\MCS(X,Y)$, in $O(|S| \log \sigma)$ time.
    \item For any string $S \in \MCS(X,Y)$, return its rank $i$
    among the strings  in lexicographic order from $\MCS(X,Y)$, in $O(|S| \log \sigma)$ time.
    \item List all strings in $\MCS(X,Y)$ in $O(|\MCS(X,Y)|)$ time, i.e., Constant Amortized Time.
\end{itemize}
% There is an algoritm for listing all MCSs between two strings that runs in Constant Amortized Time after an $O(n^3\sigma \log n)$ time preprocessing, where $n$ is the length of the strings and $\sigma$ the size of the alphabet.
\end{theorem}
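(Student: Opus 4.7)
The plan is essentially an assembly of the results already established in Sections 3 and 4: each of the four bullet points has been separately justified, so the proof just needs to collect and cite the relevant arguments, together with a one-time preprocessing step for the path counts. I would organize it as three short blocks: (i) size and construction, (ii) CAT enumeration, (iii) the three navigational queries.

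For the size and construction time, I would first invoke Corollary~\ref{cor:linear-swings}: for each of the $O(n^2)$ pairs $(l,m)$ there are only $O(n)$ distinct swing pairs $(t,b)$, since by Theorem~\ref{thm:inverted-swings} these form an antichain and hence a Pareto frontier of size at most $2n$ in an $n\times n$ grid. This gives $|V|=O(n^3)$; combined with the trivial $d^+(u)\le \sigma$ bound it yields $|E|=O(n^3\sigma)$, which is also the space of \textsf{compact} $\MCSDAG$ after unary-path compression (compression only removes nodes and keeps the same set of $st$-paths). The construction cost was broken down in Section~\ref{section:mcsdag-construction}: $O(\sigma\log n)$ per node to compute $\Ext_{l,m}$, check swing validity and update swings (using rank/select bitvectors for $\mathit{next}$), plus $O(\log n)$ per edge for the lookup/insertion in the balanced BST indexing swing pairs. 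Summing gives $O(|V|\sigma\log n)=O(n^3\sigma\log n)$, and linear-time unary-path compression does not change the asymptotics. I would also note that precomputing $p(u)$ for every node in reverse topological order costs $O(|E|)=O(n^3\sigma)$, absorbed into the construction time.

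For CAT enumeration, the plan is to cite the DFS argument of Section~\ref{sub:cat-stpaths}. The critical structural property is that, in the compacted DAG, every non-target node has out-degree $\ge 2$ and every node lies on some $st$-path, so the recursion tree produced by exploring out-neighbors in order has no unary internal nodes and every leaf corresponds to a distinct string of $\MCS(X,Y)$. Standard accounting then charges the $O(1)$ work per recursive call to the leaves, yielding total time $O(|\MCS(X,Y)|)$.

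For search, select, and rank, I would describe a single navigation template on \textsf{compact}~$\MCSDAG$: at each visited node we perform one binary search on the $O(\sigma)$ outgoing edges, sorted by their label's first character and annotated with the prefix sums of the $p(v)$ values of their endpoints. Searching a pattern $P$ descends along the matching edge at each step and either fails or arrives at a node $u$ from which the CAT enumeration above reports all extensions, giving $O(|P|\log\sigma+\mathit{occ})$. Select with input $i$ uses the same traversal but chooses, at each node, the edge whose prefix sum first reaches $i$ and subtracts accordingly, stopping at $t$ after spelling out a string $S$ in $O(|S|\log\sigma)$ time. Rank of $S$ is symmetric: traverse the unique path spelling $S$ while accumulating the prefix sums, again in $O(|S|\log\sigma)$. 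The main obstacle has already been dealt with in Section~\ref{section:mcsdag-size}, namely the Pareto-frontier bound on swings that forces $|V|=O(n^3)$; the remainder of the theorem is bookkeeping over results proved earlier.
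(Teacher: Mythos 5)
Your proposal is correct and follows essentially the same route as the paper, which likewise proves this theorem by assembling the pieces already established: the cubic node bound via the Pareto-frontier argument on swings, the $O(|V|\sigma\log n)$ construction analysis with rank/select bitvectors and balanced BSTs, the DFS-based CAT enumeration relying on out-degree at least $2$ after unary-path compression, and the search/select/rank traversals using precomputed path counts $p(u)$ with binary search over the $O(\sigma)$ prefix sums at each node. No substantive differences to report.
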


It is worth noting that (\textsf{compact}) $\MCSDAG$ can be stored in a succinct way, for example, using some recent methods introduced for automata~\cite{ChakrabortyGSS23}.

\section{Conclusions}

In this paper we considered the problem of storing and searching the Maximal Common Subsequences of two input strings, as they may reveal further common structures in sequence analysis with respect to the LCSs. Our main contribution is that of reducing time and space from exponential bounds, using the current state of the art, to polynomial bounds, using our new compact DAG, which supports CAT
enumeration, counting, and random access to the $i$-th element (i.e., rank and select operations) in nearly optimal time.

\newpage
\bibliography{arxiv.bib}

\end{document}